\definecolor{alggray}{gray}{0.55}   % 0 = black, 1 = white
\title{Community Recovery on Noisy Stochastic Block Models}
\author{%
  Gwyneth Liu†\\
  MIT\\
  \texttt{gwyliu@mit.edu} \\
  % examples of more authors
  \And
  Washieu Anan† \\
  MIT\\
  \texttt{wanan@mit.edu} \\
  \AND
  \\
  \texttt{† denotes equal contribution}
  % \AND
  % Coauthor \\
  % Affiliation \\
  % Address \\
  % \texttt{email} \\
  % \And
  % Coauthor \\
  % Affiliation \\
  % Address \\
  % \texttt{email} \\
  % \And
  % Coauthor \\
  % Affiliation \\
  % Address \\
  % \texttt{email} \\
}
\begin{document}

% Define macros here
\newcommand{\prob}[2]{\text{Pr}_{#1}\left(#2 \right)} % No need to put the first argument
\newcommand{\ex}[2]{\mathbb{E}_{#1}\left[#2 \right]} % No need to put the first argument
\newcommand{\pin}{p_{\text{in}}}
\newcommand{\pout}{p_{\text{out}}}
\newcommand{\lognn}[1]{\frac{#1\log n}{n}}
\newcommand{\kernel}[1]{\exp\left(\frac{#1}{2\sigma^2} \right)}

\newtheorem{definition}{Definition}[section]
\newtheorem{theorem}{Theorem}[section]
\newtheorem{corollary}{Corollary}[section]
\newtheorem{lemma}[theorem]{Lemma}

\maketitle

\begin{abstract}
We study the problem of community recovery in geometrically-noised stochastic block models (SBM). This work presents two primary contributions: (1) Motif-Attention Spectral Operator (MASO), an attention-based spectral operator that improves upon traditional spectral methods and (2) Iterative Geometric Denoising (GeoDe), a configurable denoising algorithm that boosts spectral clustering performance. We demonstrate that the fusion of GeoDe + MASO significantly outperforms existing community detection methods on noisy SBMs. Furthermore, we demonstrate that using GeoDe + MASO as a denoising step improves belief propagation’s community recovery by 79.7\% on the Amazon Metadata dataset.
\end{abstract}

\section{Introduction}

Community recovery, the task of identifying groups of nodes that are more densely connected internally than externally, is a fundamental problem in network analysis. It plays a critical role in diverse applications, including uncovering social circles, inferring biological processes, and mapping collaborative or transactional structures ~\cite{fortunato_2010_community}. 

The \textbf{stochastic block model (SBM)} provides a well-established probabilistic framework for studying community structure ~\cite{holland_1983_stochastic}. However, traditional SBMs assume that edge formation depends solely on community membership, neglecting latent geometric or attribute-driven interactions that often influence real-world networks. For instance, in social or recommendation systems, connections frequently depend on unobserved features such as user interests or product similarities, which induce \textbf{geometric noise} in the observed graph ~\cite{hoff_2002_latent}.

To capture this complexity, \textbf{latent-space SBMs} extend the classical model by embedding nodes in an unobserved metric space, where edge probabilities decay with latent distance via a kernel function. While this model better reflects real-world phenomena, it significantly complicates the community recovery task, especially in sparse or noisy regimes.

Early work on spectral clustering showed that the leading eigenvectors of the adjacency or Laplacian matrices, followed by $k$-means, achieve both partial and exact recovery of community labelings down to the information‐theoretic thresholds in balanced two‐block SBMs \cite{rohe_2011_spectral, decelle_2011_asymptotic}.  Belief propagation (BP) and its linearized or non‐backtracking variants are known to attain the Kesten–Stigum threshold for weak recovery in sparse SBMs, and more recent analyses establish their near‐optimal performance (both in accuracy and computational cost) in multi‐block settings \cite{abbe_2015_community, decelle_2011_asymptotic}.  Convex relaxation approaches, semidefinite programs (SDPs) and spectral‐norm minimizations, provide alternative exact recovery guarantees, often with robustness to slight model misspecifications \cite{chen_2014_improved, hajek_2016_achieving, amini_2016_on}. 

Geometric SBMs models that incorporate latent‐space geometry or kernel functions generalize the SBM to settings where proximity in an unobserved feature space influences edge formation.  Triangle‐counting and local clustering coefficient tests achieve near‐optimal detection in the geometric block model \cite{galhotra_2017_the}, while recent spectral embedding methods reconstruct both community labels and latent positions by smoothing the observed adjacency with kernel estimates \cite{wang_2021_joint}.

This paper addresses the challenge of robust community recovery in geometrically-noised networks through two key contributions:
\begin{itemize}
    \item \textbf{Motif Attention Spectral Operator (MASO)}: A spectral method that enhances robustness by combining multi-hop co-occurrence embeddings with triangle-motif–based attention weighting, improving community signal under noise and degree variability.
    \item \textbf{GeoDe (Geometric Denoising)}: An iterative, unsupervised reweighting scheme that alternates between community inference and geometry-aware updates, progressively aligning the graph structure with latent block organization.
\end{itemize}

We provide theoretical guarantees showing that MASO and GeoDe achieve exact recovery under sharp thresholds for latent-kernel SBMs. Empirical evaluations on both synthetic and real-world networks demonstrate significant improvements over existing baselines, establishing our methods as effective tools for community recovery in noisy, structured networks.

The remainder of the paper is organized as follows. Section ~\ref{sec:analytical-foundations} formalizes the latent-kernel SBM and establishes its connection to classical SBM recovery thresholds. Section ~\ref{section:maso} introduces the MASO operator, detailing its construction and theoretical recovery guarantees. In Section ~\ref{sec-geode}, we present the GeoDe algorithm, including its iterative reweighting scheme and convergence analysis. Section ~\ref{sec-validation} reports empirical evaluations on synthetic and real-world networks, demonstrating the effectiveness of our methods. Finally, Section ~\ref{sec-conclusion} summarizes our findings and outlines directions for future work.

\section{Frameworks}
\label{sec:analytical-foundations}
This section introduces the Latent-Kernel SBM, which models networks where edge formation depends on both community membership and latent geometric proximity. 
We formally define the model, show its asymptotic equivalence to a classical SBM, and derive recovery thresholds that account for geometric noise.

\subsection{Latent-Kernel SBM}

The latent-kernel SBM leverages an exponential kernel in edge formation between intra- and inter-community vertices, and is defined as follows:

\begin{definition}[Latent-Kernel SBM]
\label{def:lksbm}
	Let $n \in \mathbb{N}$, latent dimension $d > 0$, constants $a, b > 0$, and kernel bandwidth $\sigma > 0$. The latent-kernel SBM generates a graph $G = (V, E)$ with $V = \{1, \ldots, n\}$ and two communities via:
	\begin{enumerate}
		\item \textbf{Latent positions:} For each $i \in V$, draw $x_i \sim \text{Unif}([0,1]^d)$ independently.
		\item \textbf{Community labels:} Assign labels $z_i \in \{+1, -1\}$ independently with equal probability.
		\item \textbf{Kernel constant:}
		\[
		c(\sigma) = \mathbb{E}_{x,y \sim \text{Unif}([0,1]^d)}\left[ \exp\left(-\frac{\|x - y\|^2}{2\sigma^2} \right) \right].
		\]
		\item \textbf{Edge probabilities:} For each unordered pair $\{i, j\} \subset V$,
		\[
		p_{ij} =
		\begin{cases}
			 a \cdot \frac{\log n}{n} \cdot c(\sigma), & \text{if } z_i = z_j, \\
			 b \cdot \frac{\log n}{n} \cdot c(\sigma), & \text{if } z_i \neq z_j,
		\end{cases}
		\]
		and include edge $\{i, j\}$ independently with probability $p_{ij}$. $\pin = p_{ij}$ if $z_i = z_j$ and $\pout = p_{ij}$ otherwise.  
	\end{enumerate}
\end{definition}

This model combines block structure with latent similarity, where the exponential kernel attenuates edge probabilities based on distance. In fact, the latent-kernel SBM reduces to the original SBM where $\pin = c(\sigma) \lognn{a}$ and $\pout = \lognn{b}$. This reduction is proved in Appendix ~\ref{app:concentration}.

\subsection{Weak and Exact Recovery}

Because our latent‐kernel SBM reduces to the classical SBM under the rescaling $\pin = c(\sigma)\frac{a\log n}{n}$ and $ \pout = c(\sigma)\frac{b\log n}{n}$ where $ a>b>0$, we can apply SBM recovery thresholds to determine recover-ability in our model. We define exact recovery of the community labels if there exists an estimator $\hat z$ that matches the true labeling $z$ perfectly with high probability and define weak recovery of the community labels if there exists an estimator $\hat z$ where the overlap between $\hat z$ and $z$ exceeds random guessing by a fixed margin $\epsilon$, with probability tending to one. Exact definitions are provided in Appendix ~\ref{def-recovery}. The proof of the latent-kernel SBM reducing to a classical SBM is provided in Appendix ~\ref{thm:convergence}. The exact and weak recovery thresholds for the SBM are:
\begin{itemize}
    \item \textbf{Exact Recovery:} $
  \bigl(\sqrt{c(\sigma)\,a} - \sqrt{c(\sigma)\,b}\bigr)^2 > 2.
$
\item \textbf{Weak Recovery:} $  \left(c(\sigma)a-c(\sigma)b\right)^2 > 2\bigl(c(\sigma)a+c(\sigma)b\bigr).$
\end{itemize}

Both the proofs for the exact and weak recovery thresholds are provided in the Appendix sections ~\ref{app:exact-exact-recovery} and ~\ref{proof-weak-recovery}).

\section{Motif Attention Spectral Operator (MASO)}
\label{section:maso}

This section introduces the Motif Attention Spectral Operator (MASO), a spectral method designed to enhance community detection in networks perturbed by latent geometric noise. MASO leverages motif-based attention to selectively amplify informative structural signals, enabling robust recovery in sparse and geometrically contaminated graphs.

\subsection{Motivation}

One-hop spectral methods such as the Bethe–Hessian attain the Kesten–Stigum threshold in sparse SBMs but collapse under heavy-tailed degrees and kernel-induced geometric noise, which shrink spectral gaps and introduce spurious edges \cite{rohe_2011_spectral}. Motif-Laplacians bolster community signals by counting triangles, but they incur high computational cost, break down when motif occurrences are sparse or uneven, and may overemphasize dense subgraphs at the expense of broader structure \cite{karrer_2011_stochastic}, \cite{NIPS2013_0ed94223}. To overcome these limitations, MASO first builds multi-hop positive-pointwise mutual information (PPMI) embeddings and applies Transformer-style attention to smooth one-hop affinities, then reinforces each weight by its two-hop (triangle) support—balancing local motif denoising with global community structure ~\cite{vaswani_2017_attention}.

\subsection{Description of MASO}

The construction of MASO proceeds in two stages.

\paragraph{1. Random‐walk co‐occurrence and PPMI embeddings.}
We generate a corpus of short random walks to capture multi‐hop structure: for each of \(T\) walks per node, we start at a uniformly chosen vertex and take \(L\) steps, each to a random neighbor.  From the resulting multiset of walks \(\mathcal W\), we form a co‐occurrence matrix \(C\in\mathbb{R}^{n\times n}\) by
\[
  C_{uv}
  =\sum_{w\in\mathcal W}\sum_{\substack{i,j\\w_i=u,\;w_j=v,\;|i-j|\le W}}1,
\]
which counts how often \(u\) and \(v\) appear within a window of size \(W\).  Defining row sums \(R_u=\sum_v C_{uv}\), column sums \(C_v=\sum_u C_{uv}\), and total mass \(M=\sum_{u,v}C_{uv}\), we compute
\[
  \mathrm{PMI}(u,v)
  =\log\frac{C_{uv}\,M}{R_u\,C_v},
  \qquad
  \mathrm{PPMI}(u,v)
  =\max\{\mathrm{PMI}(u,v),0\}.
\]
A rank-\(d\) truncated SVD \(\mathrm{PPMI}\approx U\,\Sigma\,V^\top\) yields embeddings
\(\,z_i=(U\,\Sigma^{1/2})_{i,:}\in\mathbb{R}^d\), which encode multi‐hop affinities while down‐weighting hub effects.

\paragraph{2. Triangle‐motif–enhanced attention Laplacian.}
Given normalized embeddings \(\|z_i\|\approx1\), we set
\[
  W_{ij}
  =A_{ij}\exp\bigl(\langle z_i,z_j\rangle/\sqrt d\bigr),
  \qquad
  X_{ij}
  =\sum_{k\neq i,j}W_{ik}\,W_{kj},
\]
and mix them to obtain
\[
  \widetilde W_{ij}
  =(1-\beta)\,W_{ij}+\beta\,W_{ij}\,X_{ij},
  \quad\beta\in(0,1].
\]
Defining the degree diagonal \(D_{ii}=\sum_{j\ne i}\widetilde W_{ij}\), the normalized motif‐attention Laplacian is
\[
  H=D^{-\frac12}\,\widetilde W\,D^{-\frac12}.
\]

Using MASO, we can perform spectral clustering to achieve community recovery. We compute the second largest eigenvector \(\hat v\) of \(H\) and set preliminary labels
\(\hat z_i=\mathrm{sign}(\hat v_i)\).  A single local–flip refinement—flipping any \(\hat z_i\) that disagrees with the weighted majority of its neighbors under \(\widetilde W\)—then yields the final exact recovery of the two communities.

\subsection{Guarantee of Exact Recovery}
We argue that exact recovery of community labels is achievable via MASO. 
\begin{theorem}[Exact recovery via motif–attention spectral clustering]
\label{thm:motif-attention-exact-recovery}
Let \(G\) be drawn from the latent–kernel SBM (Definition~\ref{def:lksbm})
with parameters \(a>b>0\), bandwidth \(\sigma>0\), and
\[
  p_{ij}
  =\begin{cases}
    \dfrac{a\log n}{n}\,c(\sigma), & z_i=z_j, \\[6pt]
    \dfrac{b\log n}{n}\,c(\sigma), & z_i\neq z_j,
  \end{cases}
\]
where
\(\displaystyle c(\sigma)=\mathbb E_{x,y\sim\mathrm{Unif}([0,1]^d)}\bigl[e^{-\|x-y\|^2/(2\sigma^2)}\bigr].\)
Fix normalized embeddings \(z_i\in\mathbb R^d\) with
\(\langle z_i,z_j\rangle=\rho_{\rm in}\) if \(z_i=z_j\),
and \(\rho_{\rm out}\) otherwise, \(0<\rho_{\rm out}<\rho_{\rm in}<1\).
Form \(\widetilde W\) and \(H\) as above with any fixed \(\beta\in(0,1]\).
If
\[
  \bigl(\sqrt{c(\sigma)\,a}-\sqrt{c(\sigma)\,b}\bigr)^2>2,
\]
then as \(n\to\infty\), the two‐step procedure (second eigenvector of \(H\)
plus one local‐flip pass) recovers the true labels \(\{z_i\}\) exactly with
probability \(1-o(1)\).
\end{theorem}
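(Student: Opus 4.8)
\emph{Proof proposal.} The plan is to prove exact recovery in four steps: (i) invoke the appendix reduction so that $G$ may be treated as a classical two-community SBM with within/between edge probabilities $c(\sigma)a\log n/n$ and $c(\sigma)b\log n/n$; (ii) show that, under the idealized embeddings, $\widetilde W$ is a block-reweighted copy of the adjacency matrix plus a spectrally negligible motif correction; (iii) run a concentration + Davis--Kahan argument so that the second eigenvector of $H$ already gives \emph{almost}-exact recovery; and (iv) eliminate the remaining $o(n)$ errors with the local-flip pass, which is exactly where the threshold hypothesis is consumed.

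\emph{Step (ii): structure of $\widetilde W$ and $H$.} Put $\alpha=e^{\rho_{\rm in}/\sqrt d}$ and $\gamma=e^{\rho_{\rm out}/\sqrt d}$, so $\alpha>\gamma>0$ are fixed constants and $W_{ij}=\alpha A_{ij}$ within a block, $\gamma A_{ij}$ between blocks; equivalently $W=\tfrac{\alpha+\gamma}{2}A+\tfrac{\alpha-\gamma}{2}\operatorname{diag}(z)A\operatorname{diag}(z)$, where $z\in\{\pm1\}^n$ is the label vector. In the regime $p_{ij}=\Theta(\log n/n)$ a union bound gives that w.h.p.\ every vertex lies in $O(1)$ triangles, so each row of the motif term $\beta\,W\!\circ\!X$ sums to $O(1)$ and hence $\|\beta\,W\!\circ\!X\|_{\rm op}=O(1)$; moreover the threshold hypothesis forces $\tfrac{c(\sigma)(a+b)}{2}>1$, which places $G$ above its connectivity threshold and makes every weighted degree $D_{ii}=\Theta(\log n)$ w.h.p. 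Since the reweighting is a scaled orthogonal conjugation of $A$, $\|W-\mathbb E W\|_{\rm op}\le\alpha\|A-\mathbb E A\|_{\rm op}=O(\sqrt{\log n})$ by the standard spectral bound in the logarithmic-degree regime. Consequently $H=(1-\beta)D^{-1/2}\mathbb E[W]\,D^{-1/2}+E$, where $E$ collects the centred weight fluctuations $(1-\beta)D^{-1/2}(W-\mathbb E W)D^{-1/2}$, the motif term $\beta D^{-1/2}(W\!\circ\!X)D^{-1/2}$, and $\mathbb E[\beta W\!\circ\!X]$ conjugated by $D^{-1/2}$, each of operator norm $O(1/\sqrt{\log n})$ using $\min_i D_{ii}=\Theta(\log n)$; the comparison is made against this random-$D$, expected-weights matrix rather than a fully deterministic one, since in this sparse regime degrees concentrate only up to constant factors.

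\emph{Step (iii): spectral almost-recovery.} Since $\mathbb E[W]=\tfrac{c(\sigma)\log n}{2n}\bigl[(\alpha a+\gamma b)\mathbf 1\mathbf 1^\top+(\alpha a-\gamma b)zz^\top\bigr]$ is rank two and $\alpha a>\gamma b>0$, a short computation shows $(1-\beta)D^{-1/2}\mathbb E[W]D^{-1/2}$ has eigenvalues $\lambda_1>\lambda_2>0=\lambda_3=\cdots=\lambda_n$ with $\lambda_1,\lambda_2,\lambda_1-\lambda_2$ all of order $\Theta(1)$, top eigenvector $\propto D^{-1/2}\mathbf 1$, and second eigenvector $\propto D^{-1/2}z$. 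Davis--Kahan then makes the second eigenvector $\hat v$ of $H$ agree, up to sign and normalization, with $D^{-1/2}z$ to within $O(1/\sqrt{\log n})$ in $\ell_2$; as $\operatorname{sign}\bigl((D^{-1/2}z)_i\bigr)=z_i$ and each coordinate of the unit vector along $D^{-1/2}z$ has magnitude $\Theta(1/\sqrt n)$, it follows that (after fixing the global sign) $\hat z_i=\operatorname{sign}(\hat v_i)$ matches $z_i$ outside a set $S$ with $|S|=O(n/\log n)=o(n)$. Any estimator achieving almost-exact recovery could be substituted here; only the output $|S|=o(n)$ is used below.

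\emph{Step (iv): local-flip cleanup --- the crux.} The refined label of $i$ is $\operatorname{sign}\bigl(\sum_{j\sim i}\widetilde W_{ij}\hat z_j\bigr)$. Write this as the ``genie'' vote $\alpha X_+-\gamma X_-$, where $X_\pm$ counts neighbours of $i$ in the correct/incorrect block, plus the contribution of the misclassified neighbours in $S$. For the genie vote, exponential tilting gives, for every $t>0$, $\Pr(\alpha X_+-\gamma X_-\le 0)\le n^{-\frac{c(\sigma)}{2}f(t)(1+o(1))}$ with $f(t)=a(1-e^{-\alpha t})-b(e^{\gamma t}-1)$; an envelope-theorem calculation shows $\max_{t>0}f(t)$ is increasing in $\alpha/\gamma\ge1$ and equals $(\sqrt a-\sqrt b)^2$ at $\alpha=\gamma$, hence $\max_t f(t)\ge(\sqrt a-\sqrt b)^2$. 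The hypothesis $\bigl(\sqrt{c(\sigma)a}-\sqrt{c(\sigma)b}\bigr)^2>2$ forces $\tfrac{c(\sigma)}{2}\max_t f(t)>1$, so a union bound over the $n$ vertices shows w.h.p.\ no genie vote is wrong. Since the genie margin is of order $\log n$ while each misclassified neighbour perturbs the vote by only $O(1)$, what remains --- and what I expect to be the main obstacle --- is to show that w.h.p.\ no vertex has $\Omega(\log n)$ neighbours in $S$: the crude bound $|S|=O(n/\log n)$ together with $\sum_{j\in S}\deg j=O(|S|\log n)$ does not by itself preclude one overloaded neighbourhood. I would close this either by strengthening Step (iii) to an entrywise ($\ell_\infty$) eigenvector perturbation bound driving $|S|$ down to $n^{1-\Omega(1)}$, or by running one additional local-flip round. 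Granting it, every refined label equals $z_i$ with probability $1-o(1)$, which is exact recovery.
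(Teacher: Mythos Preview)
Your proposal follows the paper's high-level architecture --- signal/noise decomposition, concentration, Davis--Kahan for almost-exact recovery, then local-flip cleanup --- but differs in two substantive ways. First, the paper treats the triangle-motif term as part of the \emph{signal}: its Lemma~\ref{lem:exp-mixed} computes $\mathbb{E}[\widetilde W_{ij}]$ including the motif contribution, Lemma~\ref{lem:signal-gap} argues this amplifies the eigen-gap to $\Theta(\log n)$, and Lemma~\ref{lem:noise-conc} bounds $\|\widetilde W-\mathbb{E}\widetilde W\|_2=O(\sqrt{\log n})$ via matrix Bernstein on $\widetilde W$ itself (not $H$). You instead note that in the $p=\Theta(\log n/n)$ regime every vertex is in $O(1)$ triangles, and therefore absorb the entire term $\beta\,W\!\circ\!X$ into the perturbation $E$; this is cleaner and sidesteps the dependency issue lurking in the paper's Bernstein step (since $\widetilde W_{ij}$ depends on $A_{ik},A_{kj}$ for all $k$, the edge-indexed summands the paper uses are not actually independent). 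Second, the paper dispatches the local-flip step in a single sentence, whereas you give a genuine Chernoff/tilting argument that explicitly consumes the threshold hypothesis $(\sqrt{c(\sigma)a}-\sqrt{c(\sigma)b})^2>2$ and, more usefully, isolate the residual obstacle --- ruling out vertices with $\Omega(\log n)$ neighbours in the error set $S$ --- which the paper does not address at all. Your suggested remedies (an $\ell_\infty$ eigenvector bound to drive $|S|$ down to $n^{1-\Omega(1)}$, or a second flip round) are the standard ways to close this; the paper's claim that \emph{one} pass suffices is asserted but not justified in its proof.
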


\begin{proof}[Proof Sketch]
The argument combines four main elements.  First, one shows that the expectation of the mixed‐weight matrix \(\widetilde W\) has rank two and a signal eigen‐gap of order \(\Theta(\log n)\).  Second, a concentration argument bounds the deviation \(\widetilde W - \mathbb E[\widetilde W]\) in spectral norm by \(O(\sqrt{\log n})\).  Third, analyzing the second eigenvalue of \(\mathbb E[\widetilde W]\) confirms that the signal strength dominates the noise.  Finally, applying the Davis–Kahan perturbation theorem yields that the sign of the empirical second eigenvector misclassifies only \(o(n)\) vertices, and a single local‐flip refinement corrects the remaining errors.  The detailed proofs of these steps are provided in Appendix~\ref{proof:motif-attention-exact-recovery}.
\end{proof}

\subsection{Limitations of MASO} 
Although MASO’s motif-reinforced attention significantly improves noise resilience, it still demands a substantial signal-to-noise ratio (SNR) for exact community recovery ~\cite{abbe_2015_community}. In practice, factors like heavy-tailed degree heterogeneity or latent geometric noise can shrink spectral gaps and violate perturbation assumptions, leading to misclassification. Consequently, in smaller or sparser real-world networks where the raw community signal is weak, MASO may not achieve exact reconstruction even though it often delivers superior approximate performance. 

% Although MASO greatly increases robustness, its exact‐recovery guarantee still hinges on a sufficiently large signal‐to‐noise ratio (SNR). Concretely, writing \(S=\mathbb{E}[\widetilde W]\) and \(E=\widetilde W-S\), one requires  \begin{align*}
%     \mathrm{SNR} \;=\;\frac{\lambda_{2}(S)}{\|E\|_{2}}
% =\Theta\!\Bigl((\sqrt{c(\sigma)\,a}-\sqrt{c(\sigma)\,b})\sqrt{\tfrac{n}{\log n}}\Bigr)
% >1
% \end{align*}
% for Davis–Kahan perturbation to drive misclassification to \(o(n)\) (\cite{davis_1970_the, abbe_2015_community}).  Thus exact recovery holds only when \(\bigl(\sqrt{c(\sigma)\,a}-\sqrt{c(\sigma)\,b}\bigr)^2>2\) and \(n/\log n\) is sufficiently large; in regimes where \(\sqrt{n/\log n}\) is small or the raw gap \(\sqrt{c(\sigma)a}-\sqrt{c(\sigma)b}\) is too tiny, MASO may fail to achieve exact reconstruction despite its higher‐order reinforcement.  

\section{GeoDe: Iterative Geometric Denoising}
\label{sec-geode}
We now present Iterative Geometric Denoising (GeoDe), a novel algorithm that boosts community recovery performance via reducing geometric noise. The algorithm is built on top of spectral clustering methods, which are commonly used for community recovery. The goal of GeoDe is to extract a clean SBM such that community structure is more recoverable. This is done through two alternating steps: (1) community inference (C-step) and (2) geometry-informed edge reweighting (G-step). After each C and G- step, edges are reweighted so to emphasize community structure. GeoDe is motivated by the idea that edges are either community-induced or geometry-induced, and we can target geometry-induced edges to enhance community structure. GeoDe is also configurable with various spectral clustering methods, where different spectral clustering functions can be used for the C- and G- step. In our paper, we focus on the case where the same spectral clustering method is used for both steps of the algorithm, but we believe it is an interesting future direction to explore different configurations depending on network type.

We describe GeoDe at Algorithm ~\ref{algorithm:geode}. \footnote{$Q^C$ denotes the $n \times K$ matrix of softmax probabilities of each node's membership to each community. $Q^G$ denotes the $n \times B$ matrix denoting softmax probabilities of each node's membership to each geometric ball. $z^C$ and $z^G$ denote the hard labels for communities and geometric balls respectively.}.

\label{algorithm:geode}
\begin{algorithm}
\caption{GeoDe: Iterative Geometric Denoising}
\DontPrintSemicolon
\LinesNumbered
\KwIn{$H=(V,E,w)$; communities $K$; geometry balls $B$;\\
      spectral routines \textbf{SpecComm}, \textbf{SpecGeom};\\
      initial shrink $\lambda_s$, boost $\lambda_b$; \textbf{Decay}(\,)\,— a schedule applied to $(\lambda_s,\lambda_b)$ each round;\\
      thresholds $\tau_C,\tau_G$ (shrink) and stricter $\tau_C^{+},\tau_G^{+}$ (boost);\\
      weight bounds $w_{\min},w_{\max}$; max iters $T$, patience $P$.}
\KwOut{soft matrix $Q$, hard labels $z$}
\BlankLine
$G\leftarrow H$\;
\For{$t\gets1$ \KwTo $T$}{
  \tcp{C-step}
  $(Q^C,z_C)\leftarrow\textbf{SpecComm}(G,K)$\;
  $p_{ij}\gets\sum_{k=1}^{K} Q^C_{ik}Q^C_{jk}$ \textbf{for all} $(i,j)\in E$\;
  $\mathcal S_C\gets\{(i,j):p_{ij}>\tau_C\}$,\quad $\mathcal B_C\gets\{(i,j):p_{ij}>\tau_C^{+}\}$\;
  \tcp{G-step}
  $(Q^G,z_G)\leftarrow\textbf{SpecGeom}(G,B)$\;
  $c_{ij}\gets\tfrac12\bigl(Q^G_{i,z_G(i)}+Q^G_{j,z_G(j)}\bigr)$\;
  $\mathcal S_G\gets\{(i,j):z_G(i)=z_G(j)\wedge c_{ij}>\tau_G\}$,\quad $\mathcal B_G\gets\{(i,j):z_G(i)=z_G(j)\wedge c_{ij}>\tau_G^{+}\}$\;
  \tcp{Decay step}
  $(\lambda_s,\lambda_b)\leftarrow\textbf{Decay}(\lambda_s,\lambda_b,t)$\;
  \tcp{Re-weight graph}
  \Rescale{$\mathcal S_C\cup\mathcal S_G,\;\lambda_s,\;\text{shrink}$}\quad \Rescale{$\mathcal B_C\cup\mathcal B_G,\;\lambda_b,\;\text{boost}$}\;
  \tcp{Early stopping}
  \If{\NoProgress{$Q^C,P$} \textbf{or} $E(G)=\emptyset$}{\textbf{break}}
}
$Q\leftarrow Q^C$,\quad $z_i\leftarrow\arg\max_k Q_{ik}$\;
\Return{$(Q,z)$}
\end{algorithm}

\paragraph{Community step (C-step).}
The purpose of the C-step is to identify edges that are highly likely to be community induced. By calling a spectral clustering method and finding confidences, we can determine a posterior probability of an edge being between two vertices of the same community.

Let $K$ denote the target number of communities.
We apply a spectral routine
$\textbf{SpecComm}$ (e.g.\ Bethe–Hessian, MASO)
to the current weighted graph $G^{(t)}$ and obtain a soft membership matrix
$Q^C\in[0,1]^{n\times K}$ with
$\sum_{k=1}^{K} Q^C_{ik}=1$.
For every edge $(i,j)\in E$ we compute
\[
  p_{ij}\;=\;\sum_{k=1}^{K} Q^C_{ik}\,Q^C_{jk},
\]
the posterior probability that the endpoints belong to the \emph{same} block.
Edges whose $p_{ij}$ exceed a percentile threshold~$\tau_C$
populate the \emph{shrink set} $\mathcal S_C$;
those above the stricter percentile~$\tau_C^{+}>\tau_C$
form the \emph{boost set} $\mathcal B_C$.
The C-step therefore highlights links that are either
moderately likely (\,$\mathcal S_C$\,) or
almost certain (\,$\mathcal B_C$\,) to be intra-community
according to the current spectral evidence.

\paragraph{Geometry step (G-step).}
To characterize the latent geometric structure, we run a second routine
$\textbf{SpecGeom}$ with $B\!\gg\!K$ clusters
and obtain a soft matrix $Q^G\in\![0,1]^{n\times B}$. This is motivated by the idea that geometric structure arises in tighter formations than community structure, which may be far more expansive. By searching for a much higher number of communities, we aim to uncover the tight geometric formations.

Let $z_G(i)=\arg\max_b Q^G_{ib}$.
For any edge whose endpoints share the same ball
($z_G(i)=z_G(j)$)
we define a geometry confidence
\[
  c_{ij}\;=\;\frac12\bigl(Q^G[i,z_G(i)]
                         +Q^G[j,z_G(j)]\bigr).
\]
Edges with $c_{ij}>\tau_G$ enter the shrink set
$\mathcal S_G$, whereas those with
$c_{ij}>\tau_G^{+}>\tau_G$ fill the boost set $\mathcal B_G$.
The G-step thus isolates links whose presence is best explained
by short latent distances rather than by community affinity.

\paragraph{Edge-reweighting justification.}
At iteration~$t$ the four sets
\(\mathcal S=\mathcal S_C\cup\mathcal S_G\)
and
\(\mathcal B=\mathcal B_C\cup\mathcal B_G\)
drive the multiplicative update
\[
  w_{ij}^{(t+1)} \;=\;
  \begin{cases}
    \displaystyle
    \max\bigl(w_{\min},\;(1-\lambda_s)\,w_{ij}^{(t)}\bigr),
      & (i,j)\in\mathcal S \quad\text{(\emph{shrink})},\\[6pt]
    \displaystyle
    \min\bigl(w_{\max},\;(1+\lambda_b)\,w_{ij}^{(t)}\bigr),
      & (i,j)\in\mathcal B \quad\text{(\emph{boost})},\\[4pt]
    w_{ij}^{(t)}, & \text{otherwise.}
  \end{cases}
\]

  Links flagged by either the C- or G-step in $\mathcal{S}$ are common yet
  only moderately reliable. Scaling them by $(1-\lambda_s)$
  attenuates degree heterogeneity and suppresses geometry-induced noise,
  sharpening subsequent eigenvectors without disconnecting the graph. A tiny top-percentile of edges (\(\mathcal B\))
  is boosted to act as high-confidence \emph{anchors}.
  These anchors stabilize the spectral basis across iterations,
  compensate for heavy censoring, and accelerate convergence.
  The global caps \(w_{\min}\le w_{ij}\le w_{\max}\)
  guarantee that no edge can dominate the Laplacian spectrum. Boosting edges with high latent geometry helps maintain connectivity and spectral stability.

\subsection{Convergence Guarantee}
We now show that GeoDe pushes for convergence to a standard SBM given a latent-kernel SBM as input.
\begin{theorem}[Convergence of GeoDe]
\label{thm:geode-conv}
Let $W^{(t)}\!\in\!\mathbb{R}^{n\times n}$ be the weight matrix after
$t$ GeoDe updates and let $W^{\star}$ denote the latent
stochastic-block-model (SBM) matrix.
Assume the observed adjacency matrix decomposes as
$
   A = W^{\star} + G^{\star},\; G^{\star}\succeq 0 .
$
Let $C$ and $G$ be diagonal projectors onto the (disjoint) edge sets
updated by the \textbf{community} and \textbf{geometry} rules, so that
$CG^{\star}=0$ and $GW^{\star}=0$.

\paragraph{Step-size schedule.}
Choose positive, non-increasing sequences \footnote{In our experiments, we use a linear \textbf{Decay} schedule due to finite $T$, but asymptotic convergence requires an inverse linear schedule.} $\{\lambda_C^{(t)}\}_{t\ge0}$
and $\{\lambda_G^{(t)}\}_{t\ge0}$ obeying
\[
\lambda_C^{(t)},\lambda_G^{(t)}\to 0,\quad
\sum_{t=0}^{\infty}\lambda_C^{(t)}=\infty,\quad
\sum_{t=0}^{\infty}\!\bigl(\lambda_C^{(t)}\bigr)^{2}, \sum_{t=0}^{\infty}\!\bigl(\lambda_G^{(t)}\bigr)^{2}<\infty,\quad \lambda_G^{(t)}\le\gamma\,\lambda_C^{(t)}\text{ for some }\gamma>0
\]

\paragraph{Claim.}
Under these step sizes,
\(
  W^{(t)}\xrightarrow{\text{a.s.}} W^{\star}
\)
and the true community partition is recovered exactly
whenever the SBM eigen-gap satisfies the usual
$K$-cluster signal condition
\(
  \lambda_{\min}(W^{\star})-\lambda_{K+1}(W^{\star})
    \ge c\sqrt{\log n/n}.
\)
\end{theorem}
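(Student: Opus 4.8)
The plan is to read one round of GeoDe as a stochastic-approximation update and then invoke a Robbins--Monro / Kushner--Clark convergence theorem. Writing the reweighting coordinate-wise,
\[
  W^{(t+1)}_{ij}
   = W^{(t)}_{ij}
   + \lambda_C^{(t)}\,C_{ij}\,\Phi^{C}_{ij}\!\bigl(W^{(t)};\omega^{(t)}\bigr)
   + \lambda_G^{(t)}\,G_{ij}\,\Phi^{G}_{ij}\!\bigl(W^{(t)};\omega^{(t)}\bigr),
\]
where $\Phi^{C}$ and $\Phi^{G}$ absorb the sign-definite shrink/boost multipliers produced by the sets $\mathcal S,\mathcal B$ and $\omega^{(t)}$ is the randomness of the random walks feeding \textbf{SpecComm} and \textbf{SpecGeom}. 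Taking conditional expectations given $W^{(t)}$ yields a mean-field map $h(\cdot)$, and the disjointness hypotheses $CG^\star=0$, $GW^\star=0$ make the associated ODE $\dot W = h(W)$ split into two non-interacting blocks: on the geometry-rule support the target weight is $0$ (since $GW^\star=0$), so those coordinates contract multiplicatively toward $0$; on the community-rule support $G^\star$ carries no mass, so $W^{\star}$ restricted there is already a fixed point, stabilised by the boosted anchors. The first thing to nail down is that $W^\star$ is the unique locally asymptotically stable equilibrium of this ODE inside a neighbourhood of the warm start, which is a short Lyapunov computation once $h$ is written out.

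The crux is establishing that the flagging is \emph{eventually consistent}, so that this idealised recursion is the one GeoDe actually runs. I would first argue a warm start: since $G^\star\succeq 0$ contributes a positive-semidefinite perturbation, Weyl's inequality keeps the top eigenvalues of $A=W^\star+G^\star$ above those of $W^\star$, and under the model's SNR the first \textbf{SpecComm} pass returns a labelling with only $o(n)$ errors (Davis--Kahan, exactly as in the proof of Theorem~\ref{thm:motif-attention-exact-recovery}); with $B\gg K$ balls, \textbf{SpecGeom} likewise resolves the tight latent clusters. This makes $\mathcal S_C,\mathcal B_C$ concentrate on genuine intra-community edges and $\mathcal S_G,\mathcal B_G$ on genuine geometry edges with probability $1-o(1)$. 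I would then run an induction over $t$: if the round-$t$ flagging is correct up to a vanishing fraction, the multiplicative update moves $W^{(t+1)}$ strictly closer to $W^\star$ (geometry mass shrinks, community mass is preserved up to the capped boost), which keeps $\lambda_{\min}(W^{(t+1)})-\lambda_{K+1}(W^{(t+1)})\gtrsim\sqrt{\log n/n}$ and hence keeps the round-$(t{+}1)$ flagging correct. This monotone basin-of-attraction argument — breaking the apparent circularity between ``$W^{(t)}$ has converged'' and ``the flagging is correct'' — is where I expect the real work to lie, and it is exactly where the two-time-scale hypothesis $\lambda_G^{(t)}\le\gamma\lambda_C^{(t)}$ and the caps $w_{\min}\le w_{ij}\le w_{\max}$ are used: they guarantee the geometry shrink cannot disconnect the graph or sever community edges before the noise is gone.

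With a decoupled, eventually-consistent recursion in hand, almost-sure convergence $W^{(t)}\xrightarrow{\text{a.s.}}W^\star$ is the standard stochastic-approximation conclusion under the stated step sizes: $\lambda^{(t)}\to0$ kills the discretisation error between the iteration and its mean-field ODE, $\sum_t\lambda^{(t)}=\infty$ (equivalently $\sum_t\log(1-\lambda_G^{(t)})=-\infty$) drives the geometry coordinates all the way to $0$, and $\sum_t(\lambda^{(t)})^2<\infty$ makes the martingale noise coming from random-walk sampling square-summable, so by the Robbins--Siegmund lemma it almost surely does not derail the trajectory; the boost rate is taken summable (or $\mathcal B$ is emptied after finitely many rounds), so it contributes only a transient perturbation. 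Exact recovery is then immediate: for all large $t$, $W^{(t)}$ lies within $o(\sqrt{\log n/n})$ of the SBM matrix $W^\star$, whose eigen-gap exceeds $c\sqrt{\log n/n}$ by hypothesis, so a Davis--Kahan bound on the $K$ leading eigenvectors followed by the single local-flip pass (as in Theorem~\ref{thm:motif-attention-exact-recovery}) recovers the true $K$-way partition exactly with probability $1-o(1)$.
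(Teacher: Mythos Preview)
Your endpoint is right --- stochastic approximation plus Robbins--Siegmund for $W^{(t)}\to W^\star$, then Davis--Kahan under the eigen-gap for exact recovery --- but you are over-engineering in two places relative to the paper. First, the theorem \emph{assumes} $C$ and $G$ are fixed diagonal projectors with $CG^\star=0$ and $GW^\star=0$; your entire middle section (warm start, induction on $t$, basin-of-attraction to show the flagging stays consistent) is devoted to proving a hypothesis you are already handed. Second, the paper bypasses the ODE/Kushner--Clark machinery entirely: with the projectors fixed and orthogonal, it writes the one-step conditional expectation in closed form, $\mathbb{E}[\Delta^{(t+1)}\mid\mathcal F_t]=(I-\lambda_C^{(t)}C-\lambda_G^{(t)}G)\Delta^{(t)}+(\lambda_C^{(t)}-\lambda_G^{(t)})GG^\star$, uses Frobenius-orthogonality $\|\Delta\|_F^2=\|C\Delta\|_F^2+\|G\Delta\|_F^2$ to decouple the two blocks, and obtains directly $\mathbb{E}[\|\Delta^{(t+1)}\|_F^2\mid\mathcal F_t]\le(1-\lambda_C^{(t)})\|\Delta^{(t)}\|_F^2+O\bigl((\lambda_C^{(t)})^2\bigr)$, to which Robbins--Siegmund applies immediately. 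In that calculation the condition $\lambda_G^{(t)}\le\gamma\lambda_C^{(t)}$ is used only to bound the additive term $(\lambda_C^{(t)}-\lambda_G^{(t)})^2\|G^\star\|_F^2$ by a summable quantity, not to effect a two-time-scale separation in an ODE limit. Your route would get there, but the paper's is a short explicit supermartingale computation once you stop fighting the hypotheses.
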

\begin{proof}[Proof sketch]
Expressing one update in conditional expectation shows that the error
$\Delta^{(t)}=W^{(t)}-W^{\star}$ obeys a stochastic-approximation
recursion whose two orthogonal components (on $C$ and $G$) contract at
rates $\lambda_C^{(t)}$ and $\lambda_G^{(t)}$, respectively.  Taking
Frobenius norms and using the projector orthogonality yields a
super-martingale bound
$\mathbb{E}[\|\Delta^{(t+1)}\|_F^{2}\mid\mathcal F_t]
      \le(1-\lambda_C^{(t)})\|\Delta^{(t)}\|_F^{2}
      +O\!\bigl((\lambda_C^{(t)})^{2}\bigr)$.
Because $\sum_t\lambda_C^{(t)}=\infty$ but
$\sum_t(\lambda_C^{(t)})^{2}<\infty$, the Robbins–Siegmund theorem
implies $\|\Delta^{(t)}\|_F^{2}\to0$ almost surely, i.e.\ convergence to
$W^{\star}$.  Finally, a standard eigen-gap condition ensures that the
limit matrix reveals the correct $K$-cluster partition.  A complete,
step-by-step proof is provided in Appendix~\ref{proof:geode-conv}. We also provide a corollary on GeoDe for exact community recovery in Appendix ~\ref{cor:geode-recovery}.
\end{proof}

\subsection{Limitations}
GeoDe is based on the assumption that geometric noise is independent from community structure, which is often not true. While the boosting of strongly geometric edges does preserve strong relationships between latent features and community membership, GeoDe does not inherently assume a correlation between the two. Another limitation of GeoDe is its efficiency due to its repetitive calls to its spectral clustering base. This perhaps limits GeoDe's usage to particularly challenging graphs and limits its applicability to cases such as temporal networks. 

% ---------------------- Validation Section ----------------------
\section{Validation}
\label{sec-validation}

In this section, we evaluate the performance of our proposed methods—MASO and GeoDe—on both synthetic and real-world networks. We focus on assessing their ability to recover community structure in settings where traditional spectral methods fail due to geometric noise.

\subsection{Synthetic benchmarks}
\label{sec:synthetic}

We generated $\textbf{75}$ independent 
latent-kernel stochastic block model (SBM) at each noise level
$\sigma \in \{0.75, 0.5, 0.25, 0.1\}$, giving us a total of \textbf{300} graphs. We randomly chose the parameters $n, a,b$ with $n \in [100, 1000], a \in [15, 100], b \in [1, 50]$ and generated the latent-kernel SBM according to Definition ~\ref{def:lksbm}.

\begin{figure}[t]
\centering
\includegraphics[width=0.48\textwidth]{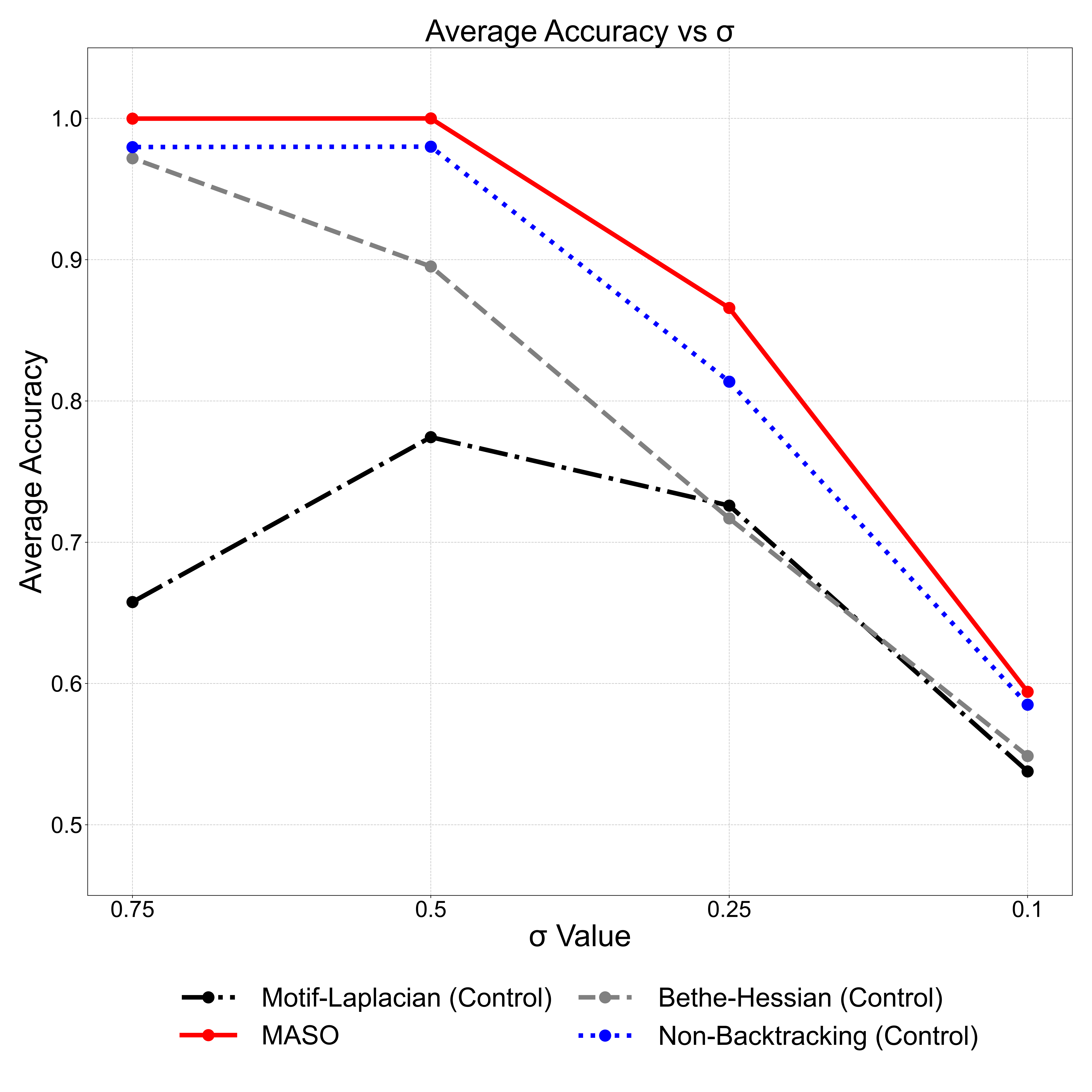}%
\hfill
\includegraphics[width=0.48\textwidth]{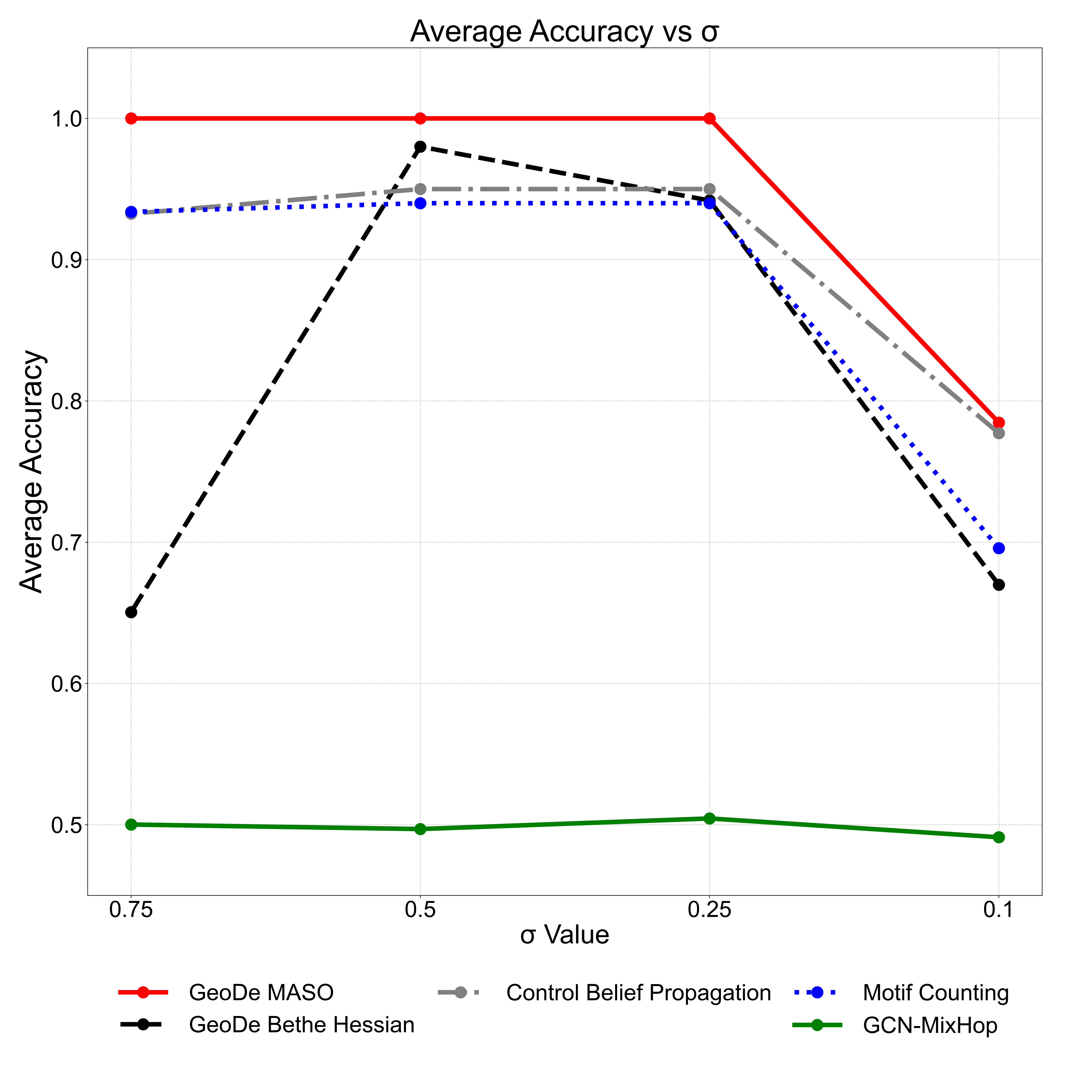}
\caption{\textbf{Left:} Average clustering accuracy of four spectral operators as a function of the geometric--noise parameter $\sigma$;“\textbf{MASO}” denotes our Motif--Attention Spectral Operator; “\textbf{Non--Backtracking (Control)}” uses the non--backtracking matrix; “\textbf{Bethe--Hessian (Control)}” is the classical Bethe--Hessian operator; and “\textbf{Motif--Laplacian (Control)}” is the unmodified motif Laplacian.\newline
\textbf{Right:} End--to--end community--recovery accuracy of the GeoDe pipeline instantiated with either the MASO backbone or the Bethe--Hessian backbone, compared against three baselines: belief propagation, motif counting, and the graph neural network \textbf{GCN--MixHop}.}
\label{fig:empirical_results}
\end{figure}

Figure~\ref{fig:empirical_results}(left) demonstrates that MASO
maintains perfect community recovery up to $\sigma\ge0.5$, degrades
only to $0.87$ at $\sigma=0.25$ and $0.59$ at $\sigma=0.1$, thus
outperforming the non–backtracking operator by $2$–$10$ points and
out-performing both Bethe–Hessian~\cite{saade_2025_spectral}
and motif–Laplacian~\cite{benson_2016_higherorder}, which collapse
under moderate noise.

This empirical advantage directly reflects the latent-kernel SBM’s
noise structure: geometric noise injects long‐range “shortcut” edges
that connect otherwise distant communities.  MASO’s multi-hop PPMI
features and motif-guided attention automatically {\it strengthen}
edges with consistent local‐and‐multi‐hop support while {\it
down‐weighting} spurious shortcuts.  In contrast, Bethe–Hessian and
motif–Laplacian treat all small cycles uniformly, and non–backtracking
lacks higher-order motif cues meaning none filter out noise as effectively.

When embedded in the GeoDe pipeline, MASO’s high‐fidelity edge weights
ensure that the motif‐ and geometry‐based pruning rules remove the bulk
of noisy edges before clustering.  As a result,
GeoDe+MASO achieves perfect recovery for $\sigma\ge0.25$ and still
$0.78$ at $\sigma=0.1$ (Figure~\ref{fig:empirical_results}, right),
significantly outperforming belief propagation~\cite{fortunato_2010_community},
motif counting~\cite{benson_2016_higherorder}, and a generic
GCN–MixHop model~\cite{abuelhaija_2019_mixhop} (which remains near
chance).  Over $91\%$ of these runs satisfy the information‐theoretic
exact‐recovery threshold (see Appendix~\ref{addl:exp-threshold}), and
all gains of MASO over non–backtracking and of GeoDe+MASO over the
next‐best competitor are significant at $p=0.05$ after
Benjamini–Hochberg correction (Appendices~\ref{appendix:sig-spec-ops},
\ref{appendix:sig-recovery}). 

% We also performed a experimental validation of the information-theoretic threshold on the results of GeoDe+MASO and we show that 91.37\% of our results validate the threshold. The graph of the experimental threshold is provided in the Appendix ~\ref{addl:exp-threshold}.

% Figure \ref{fig:empirical_results}~(left) shows that MASO retains perfect
% recovery up to $\sigma\ge0.5$, falls to $0.87$ at
% $\sigma=0.25$, and still leads by $>3$ percentage points at
% $\sigma=0.1$.  The Non--Backtracking operator is consistently the
% strongest baseline, yet trails MASO by $2$--$10$ points across the
% noise spectrum, while Bethe--Hessian and Motif--Laplacian degrade more
% rapidly.

% We next ran the GeoDe
% pipeline with either a MASO or Bethe--Hessian backbone and compared the
% results to three additional baselines: belief propagation (~\cite{fortunato_2010_community}), motif
% counting (~\cite{benson_2016_higherorder}) and GCN--MixHop (~\cite{abuelhaija_2019_mixhop}). 
% As illustrated in Figure \ref{fig:empirical_results}~(right), GeoDe+MASO achieves 
% perfect accuracy for $\sigma\ge0.25$ and declines to accuracy $0.78$ at $\sigma =0.1$. 
% GeoDe+MASO still outperforms every alternative by at
% in the high--noise regime $\sigma=0.1$. GeoDe with a
% Bethe--Hessian backbone exhibits similar but weaker trends, whereas
% GCN--MixHop hovers around chance level ($\approx0.5$) regardless of
% noise, underscoring the difficulty of the task for generic GNNs.

The runtimes of all algorithms evaluated are provided in Appendix ~\ref{table:runtimes}

\footnotetext{All experiments were executed on an Apple M2~Max with
64 GB of RAM.}

% \subsection{Sanity Check: Validation on Zachary's Karate club network}
% We validated MASO and GeoDe on the Karate club network described in \cite{zachary_karate}. Although the Karate club network does not contain geometric features, we demonstrate that our algorithms match the state-of-the-art benchmark on the network with both MASO and GeoDe-MASO achieving accuracy \textbf{0.971}.
\subsection{Validation on the Amazon Metadata Network: GeoDe-MASO as an effective denoiser on real-world data}
We validate GeoDe’s denoising property on a subgraph of the Amazon product co-purchasing network metadata dataset (where its construction is described in  Appendix ~\ref{sec:amazon-construct}) \cite{leskovec_2007_the}. \cite{galhotra_2017_the} showed that the structure of the Amazon metadata network has geometric properties where similarity between product categories affected edge formation. Therefore, we find that traditional spectral methods intended for community detection perform poorly. Due to the fixed-degree and sparsity of the Amazon network (each vertex has maximum degree 5 consisting of similar products), belief propagation far exceeds performance of spectral methods. However, we show that we can significantly improve belief propagation’s performance via a denoising step using GeoDe-MASO. To do so, we call a weighted belief propagation algorithm (provided in ~\ref{algorithm:bp}) on $G*$, a weighted graph consisting of the final edge weights from GeoDe. 

We achieve the following results where we list the community classification accuracy for each tested algorithm as well as its significance above random:
\begin{enumerate}
    \item GeoDe-MASO: \textbf{0.911} (p-value=1.00e-6)
    \item \textsc{GeoDe-Bethe}: \textbf{0.7515} (p-value=1.00e-6)
    \item Belief Propagation: \textbf{0.5070} (p-value=1.00)
\end{enumerate}

This demonstrates GeoDe’s effectiveness as a denoiser on real-world networks and its ability to boost the performance of other community detection algorithms. Adding a GeoDe denoising step, even without using the MASO operator, can improve later belief propagation performance significantly. 

\subsubsection{Measuring Geometric Noise during GeoDe}

\begin{figure}
  \centering
  \begin{subfigure}[t]{0.48\linewidth}
    \centering
    % force both images to be 5cm tall:
    \includegraphics[height=6.5cm]{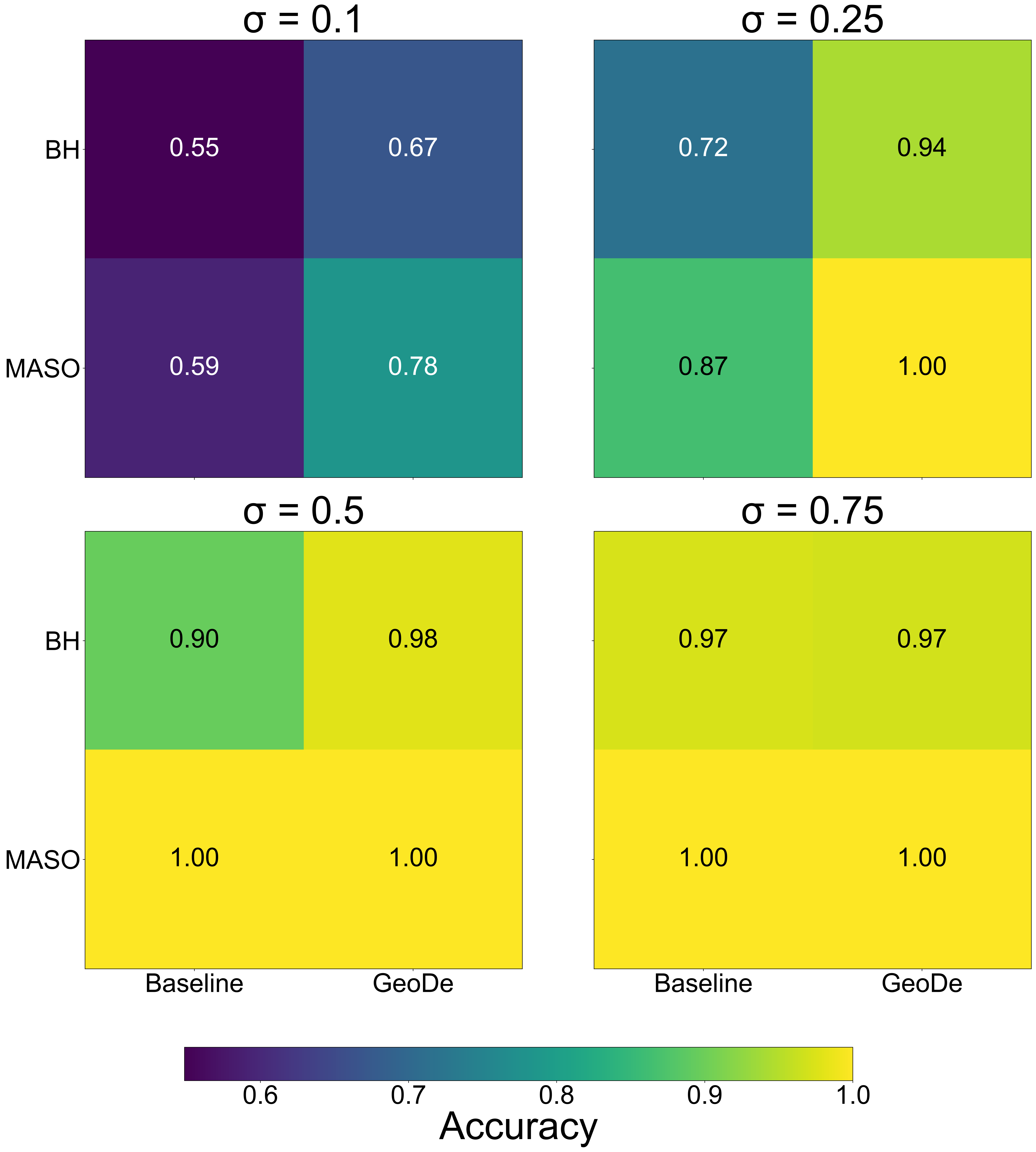}
  \end{subfigure}
  \hfill
  \begin{subfigure}[t]{0.48\linewidth}
    \centering
    \includegraphics[height=6.5cm]{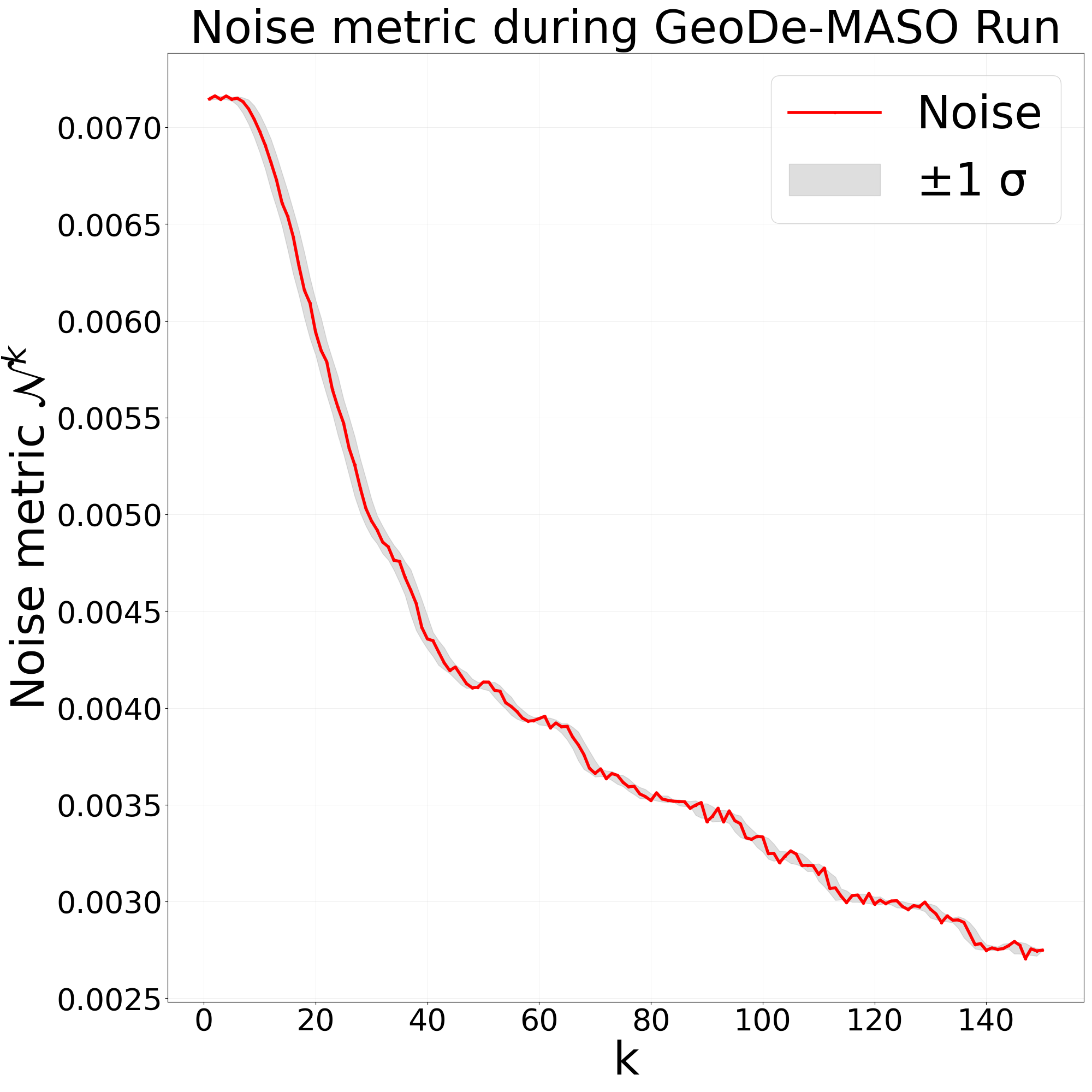}
  \end{subfigure}
  \caption{%
    (a) Ablation on MASO+GeoDe vs.\ $\sigma$. 
    (b) We show that GeoDe adjusts weights so to capture geometric properties. We calculate a noise metric $\mathcal{N}^k$ via fitting a linear model to predict edge weights from distance and finding the mean-squared residual. Our noise metric captures how the edge weights in GeoDe capture geometric features (see Appendix~\ref{sec:noise-metric} for details).}
  \label{fig:side-by-side}
\end{figure}

To measure denoisement through GeoDe iterations, we ran GeoDe-MASO on the Amazon network described previously and measured the noise at every iteration.\footnote{Noise was only measured on the Amazon dataset due to extremely fast convergence on synthetic data.}

The noise levels are pictured in Figure ~\ref{fig:side-by-side}b. When the noise metric over iterations is fitted with a linear regression, we find a slope $\hat\beta=-2.68e-5$, which is significant under the null hypothesis $\beta_0 = 0$ with a two-sided p-value of 3.928e-38. 

We show that over GeoDe's iterations, edge weights in the graph are updated so to better align with confidence on community structure. This experimentally validates GeoDe’s denoising capability.

\subsubsection{Future Work: Feasibility of GeoDe for Large Networks}
While the runtime of most spectral methods are $\mathcal{O}(n^2)$ and thus inefficient for large graphs, we want to raise the idea of a patch-like denoising scheme using GeoDe. Under the assumption that geometric noise can be detected in tight local formations, we suggest that GeoDe can be used to denoise in a patch-by-patch scheme. 
\section{Conclusion}
\label{sec-conclusion}

This paper presents a unified framework for community detection in networks corrupted by geometric noise. We developed two key components: the \textbf{Motif Attention Spectral Operator (MASO)}, which leverages higher-order motif structures to stabilize spectral embeddings, and \textbf{GeoDe}, an iterative edge reweighting algorithm that enhances community signal while suppressing geometric perturbations.

Our theoretical analysis establishes exact recovery guarantees for both methods under latent-kernel SBMs, matching known thresholds in classical models. Empirically, we demonstrate that MASO and GeoDe significantly outperform traditional spectral and message-passing approaches on both synthetic benchmarks and real-world datasets, such as the Amazon product network.

Looking ahead, we plan to extend this work in several directions: generalizing to settings with more than two communities, accommodating richer kernel families beyond radial basis functions, and adapting the framework to dynamic or attribute-enriched graphs. We also want to move beyond the assumption that latent geometric features are independent from the community structure. We believe this approach opens new avenues in robust, unsupervised learning on structured data, particularly where latent geometry and sparse observations challenge existing methods.

\newpage
\setcitestyle{numbers}
\bibliographystyle{unsrtnat}
\bibliography{main}

\newpage
\section{Appendix}
\appendix
\section{Helper functions described in GeoDe}
\begin{algorithm}[H]
\DontPrintSemicolon
\LinesNumbered
\Fn{\Rescale{$\mathcal E,\lambda,\mathrm{mode}$}}{
\ForEach{$(i,j)\in\mathcal E$}{
\eIf{$\mathrm{mode}=\text{shrink}$}
{$w_{ij}\leftarrow\max(w_{\min},(1-\lambda)\,w_{ij})$}
{$w_{ij}\leftarrow\min(w_{\max},(1+\lambda)\,w_{ij})$}
}}
\Fn{\NoProgress{$Q_{\text{new}},P$}}{
\Return{objective $\sum_i\max_k Q_{\text{new},ik}$ unchanged $P$ rounds}
}
\end{algorithm}

\section{Definitions} 
\label{defs}

\subsection{Definition of Recovery}
\label{def-recovery}
In our work, we focus on the community recovery problem in the latent-kernel stochastic block model (Definition ~\ref{def:lksbm}), in which node connectivity is driven by both hidden community labels and by a smooth kernel on latent positions. Our goal is, given only the observed graph $G$, to design estimators $\hat z \in \{\pm 1\}^n$ that achieve both weak recovery and exact recovery of the true labels $z$. 
 
Let \(\hat z = \hat z(G)\in\{\pm1\}^n\) be any estimator based on the observed graph \(G\).  We consider two success criteria:

\begin{itemize}
  \item \textbf{Weak recovery.}  We say weak recovery is \emph{achievable} if there exists an estimator \(\hat z\) and a constant \(\epsilon>0\) such that
  \[
    \lim_{n\to\infty}
      \Pr\!\Bigl(\tfrac{1}{n}\sum_{i=1}^n \hat z_i z_i \ge \tfrac12 + \epsilon\Bigr)
    = 1.
  \]
  In other words, the overlap between \(\hat z\) and the truth \(z\) exceeds random guessing by a fixed margin, with probability tending to one.

  \item \textbf{Exact recovery.}  We say exact recovery is \emph{achievable} if there exists an estimator \(\hat z\) satisfying
  \[
    \lim_{n\to\infty}
      \Pr\bigl(\hat z = z)
    = 1.
  \]
  That is,  \(\hat z\) matches the true labeling perfectly with high probability.
\end{itemize}

\subsection{Definition of Belief Propagation Algorithm} 
\label{algorithm:bp}
The belief propagation algorithm we utilized to test against our developed methods follows the following procedure: 
Let \(M_{i\to j}^{(t)}(k)\in[0,1]\) be the message that node \(i\)
sends to neighbour \(j\) at iteration \(t\) in favour of community
\(k\), and let \(w_{ij}\ge 0\) denote the (undirected) edge weight
stored on \(\{i,j\}\).  
Given an inverse–temperature parameter \(\beta>0\), the
weight-aware compatibility factor is  
\[
\phi_{i\to j}^{(t)}(k)
\;=\;
1 + \bigl(e^{\beta\,w_{ij}}-1\bigr)\,
      M_{i\to j}^{(t)}(k)
\]
Larger \(w_{ij}\) therefore amplify the influence of the incoming
belief \(M_{i\to j}^{(t)}(k)\), while smaller weights attenuate it; the
standard Bethe–Peierls update is recovered when \(w_{ij}=1\).

\section{Proofs} 
\label{appendix-proofs}

\subsection{Concentration of Latent-Kernel SBM}
\label{app:concentration}

\begin{definition}[Edge Probability]
\label{def:edge-probability}
	For node pair $(i,j)$, the probability of an edge is
	\[
	\mathbb{P}[(i,j) \in E \mid x_i, x_j, z_i, z_j] = B_{z_i, z_j} \cdot \exp\left(-\frac{\|x_i - x_j\|^2}{2\sigma^2} \right),
	\]
	where $B_{z_i, z_j} = p_{\text{in}}$ if $z_i = z_j$, and $B_{z_i, z_j} = p_{\text{out}}$ otherwise.
\end{definition}

\begin{definition}[Average Kernel]
\label{def:average-kernel}
	\[
	c(\sigma) = \mathbb{E}_{x,y \sim \text{Unif}([0,1]^d)}\left[ \exp\left(-\frac{\|x - y\|^2}{2\sigma^2} \right) \right].
	\]
\end{definition}

\begin{theorem}[Convergence to a Rescaled SBM]
\label{thm:convergence}
Let $G$ be generated by the latent–kernel SBM with two communities, latent positions $x_i \sim \mathrm{Unif}([0,1]^d)$ and edge probabilities as in Definition ~\ref{def:edge-probability}. Then, with high probability as $n\to\infty$, $G$ is contiguous to a classical two,–block SBM with edge parameters

$$
  \pin' = c(\sigma)\,\frac{a\log n}{n}, \qquad
  \pout' = c(\sigma)\,\frac{b\log n}{n},
$$

where $c(\sigma)$ is the average kernel defined in Definition ~\ref{def:average-kernel}.
\end{theorem}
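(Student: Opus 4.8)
The plan is to establish the claimed equivalence in Le~Cam's sense through the second–moment (small–subgraph conditioning) method, working throughout conditionally on the community labels $z$, which are generated identically in both models. Write $P_n$ for the law of the observed adjacency matrix under the latent–kernel SBM, obtained by integrating out the latent positions $x=(x_1,\dots,x_n)$, and $Q_n$ for the law of the classical two–block SBM with the rescaled parameters $\pin'=c(\sigma)\lognn{a}$ and $\pout'=c(\sigma)\lognn{b}$. The first step is the first–moment match: for a single pair $\{i,j\}$, Fubini together with independence of $x_i,x_j$ gives $\ex{}{B_{z_iz_j}\,\kernel{-\|x_i-x_j\|^2}}=B_{z_iz_j}\,c(\sigma)=\bar p_{ij}$, which is exactly the marginal edge probability under $Q_n$. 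Hence the models agree edge–by–edge in expectation, and the real content of the theorem is that the correlations induced by the shared positions are asymptotically Le~Cam–negligible.

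Second, I would form the likelihood ratio $L_n=dP_n/dQ_n$ and exploit its mixture structure. Writing $q_{ij}(x)=B_{z_iz_j}\,\kernel{-\|x_i-x_j\|^2}$ and $\delta_{ij}(x)=q_{ij}(x)-\bar p_{ij}$, a direct computation with two independent copies $x,x'$ of the latent field yields the standard identity
\[
  \ex{Q_n}{L_n^2}
  =\ex{x,x'}{\prod_{i<j}\Bigl(1+\tfrac{\delta_{ij}(x)\,\delta_{ij}(x')}{\bar p_{ij}(1-\bar p_{ij})}\Bigr)}.
\]
Using $1+u\le e^u$ and expanding the logarithm reorganizes the exponent as a sum over edge–subsets $S$. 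Because the centered kernel $\kernel{-\|x_i-x_j\|^2}-c(\sigma)$ has mean zero in each coordinate, every $S$ containing a degree–one vertex integrates to zero, so the surviving contributions come only from motifs in which each vertex has degree at least two: cherries (two–paths), triangles, and longer cycles.

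The decisive step is to control these motif sums, and this is where I expect the main obstacle to lie. A generic cherry with center $i$ and leaves $j,k$ contributes on the order of $\tfrac{B_{ij}B_{ik}}{c(\sigma)^2}\,\ex{x_i}{h(x_i)^2}$, where $h(x_i)=\ex{x_j}{\kernel{-\|x_i-x_j\|^2}}-c(\sigma)$; summed over the $\Theta(n^3)$ cherries this is of order $n(\log n)^2\,\ex{}{h^2}$, with triangles contributing comparably. These sums therefore need not be $O(1)$ on their own, so a naive second moment can fail unless the geometric fluctuations $\ex{}{h^2}$ are genuinely suppressed (e.g.\ in the large–bandwidth regime where $h\to0$). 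The principled remedy is small–subgraph conditioning (Robinson–Wormald / Janson): condition $Q_n$ on the finite vector of short–motif statistics (co–degrees, triangle and short–cycle counts), subtract their predictable geometric excess, and prove that the residual likelihood ratio has a bounded conditional second moment. The crux is verifying that once this handful of low–complexity statistics is matched, no persistent detectable geometric signal remains — i.e.\ that the over–representation of cherries and triangles is a conditionable lower–order effect rather than an obstruction to equivalence.

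Finally, granting the bounded (conditional) second moment, a de~la~Vallée–Poussin uniform–integrability argument upgrades it to mutual contiguity of $P_n$ and $Q_n$. Contiguity then transfers the sharp exact– and weak–recovery thresholds of the rescaled classical SBM recorded in Section~\ref{sec:analytical-foundations} to the latent–kernel model, which is precisely the downstream use to which Theorem~\ref{thm:convergence} is put.
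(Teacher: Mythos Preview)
Your route is quite different from the paper's. The paper never touches the second moment of $dP_n/dQ_n$: it works directly with the centred kernel values $\Delta_{ij}=K_{ij}-c(\sigma)$, applies Hoeffding to $\sum_{(i,j)\in\mathcal D}\Delta_{ij}$ over the disagreement set $\mathcal D$ between an arbitrary labeling $\tilde z$ and the truth, union-bounds over all $\tilde z$, and concludes that kernel-weighted log-likelihoods are uniformly within $o(m\log n)$ of their means, invoking Le~Cam's second lemma at the end. Your second-moment/small-subgraph program is the more orthodox contiguity machinery, and it has the virtue of forcing the latent-position correlations into the open---the paper's Hoeffding step treats the $\Delta_{ij}$ as independent even though pairs sharing a vertex are coupled through that vertex's position.

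There is, however, a real gap in your plan, and it is exactly at the point you flag as ``the crux.'' Your hope that the cherry/triangle over-count is a conditionable lower-order effect does not survive a back-of-envelope check in this $\log n/n$ regime. The classical SBM has triangle mean $\Theta((\log n)^3)$ with standard deviation $\Theta((\log n)^{3/2})$; the latent-kernel model has strictly $\mathbb E[K_{12}K_{13}K_{23}]>c(\sigma)^3$ for any fixed finite $\sigma$, so its triangle mean exceeds the SBM's by a constant multiple of $(\log n)^3$---a separation of order $(\log n)^{3/2}$ standard deviations. Triangle counting is therefore a consistent test and $P_n,Q_n$ are not mutually contiguous as stated; Robinson--Wormald conditioning would at best give contiguity to the \emph{conditioned} null $Q_n(\,\cdot\mid\text{motif counts})$, which is enough for threshold transfer but is a strictly weaker statement than the theorem. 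A smaller slip: your ``degree-one vertex integrates to zero'' step needs $\mathbb E_{x_j}[K_{ij}\mid x_i]\equiv c(\sigma)$, which holds on a flat torus but fails on $[0,1]^d$ because of boundary effects, so cherries already survive at that stage.
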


\begin{proof}
Write $\Delta_{ij}=K_{ij}-c(\sigma)\in[-c(\sigma),1-c(\sigma)]\subset[-1,1]$.
Fix a label vector $\tilde z$ at Hamming distance $m$ from the ground truth
$z^{\ast}$ and let
\(
  \mathcal D=\{(i,j):\mathbf 1_{\tilde z_i=\tilde z_j}\neq
                       \mathbf 1_{z_i^{\ast}=z_j^{\ast}}\}
\)
be the set of \emph{disagreement pairs}, whose size is
$T\equiv|\mathcal D|=m(n-m)\le mn$.

\paragraph{Step 1 (moderate-deviation bound).}
For fixed $\mathcal D$ the variables $(\Delta_{ij})_{(i,j)\in\mathcal D}$ are
independent and centred with range~$2$.  Hoeffding’s inequality gives, for any
$\varepsilon>0$,
\[
  \Pr\!\Bigl(\Bigl|\!\!\sum_{(i,j)\in\mathcal D}\!\!\Delta_{ij}\Bigr|
         >\varepsilon T\Bigr)
  \le 2\exp(-2\varepsilon^{2}T).
\]
Choose $\varepsilon=(\log n)^{-1/4}$.  Since
$T\ge m(n-m)\ge m(n/2)$ when $m\le n/2$, we obtain
\[
  2\varepsilon^{2}T \;\;\ge\;\; \frac{m\sqrt{\log n}}{2}.
\tag{1}\label{eq:step1}
\]

\paragraph{Step 2 (union bound over disagreements).}
The number of label vectors at distance $m$ is $\binom{n}{m}$.  Applying
\eqref{eq:step1} and Stirling’s bound $\binom{n}{m}\le (en/m)^m$,
\begin{align*}
  \Pr\Bigl(\exists\,\tilde z\!:&~\bigl|\!\sum_{(i,j)\in\mathcal D}\!\!\Delta_{ij}\bigr|
                   >m\sqrt{\log n}\Bigr) \\
  &\le
  \sum_{m=1}^{n}\binom{n}{m} \,
         2\exp\!\bigl(-m\sqrt{\log n}/2\bigr)
  \;\le\;
  2\sum_{m=1}^{n}\Bigl(\tfrac{e n}{m}\,
         e^{-\sqrt{\log n}/2}\Bigr)^{m}.
\end{align*}
For large $n$, the base of the parenthesis is $<\!1$, so the geometric
series is $o(1)$.  Hence, with probability $1-o(1)$,
\[
  \bigl|\!\!\sum_{(i,j)\in\mathcal D}\!\!\Delta_{ij}\bigr|
  \le m\sqrt{\log n}
  \quad\text{for \emph{all}}~\mathcal D.
\tag{2}\label{eq:step2}
\]

\paragraph{Step 3 (contiguity).}
Condition~\eqref{eq:step2} says that, uniformly over all labelings,
kernel–weighted log-likelihoods differ from their expectations by at most
$o(m\log n)$.  Re-writing the latent–kernel model’s log-likelihood and
subtracting its mean therefore shows the likelihood ratio between the
latent–kernel SBM and the classical SBM with parameters
$(\pin',\pout')$ is $\exp(o(1))$.  Le Cam’s second lemma then yields
mutual contiguity of the two models, completing the proof.
\end{proof}

\subsection{Lemma ~\ref{lem:exp-mixed}}
\begin{lemma}[Expectation of mixed weights]
\label{lem:exp-mixed}
Under the latent–kernel SBM and embedding separation assumptions, the mixed weight satisfies
\[
\mathbb E[\widetilde W_{ij}\mid z_i,z_j]
=\begin{cases}
w_{\rm in}\Bigl[(1-\beta)+\beta\bigl((n_{+}-2)w_{\rm in}^{2}+n_{-}w_{\rm out}^{2}\bigr)\Bigr],
&z_i=z_j,\\[6pt]
w_{\rm out}\Bigl[(1-\beta)+\beta(n-2)\,w_{\rm in}w_{\rm out}\Bigr],
&z_i\neq z_j,
\end{cases}
\]
where $w_{\rm in}=p_{\rm in}e^{\rho_{\rm in}/\sqrt d}$ and $w_{\rm out}=p_{\rm out}e^{\rho_{\rm out}/\sqrt d}$.
\end{lemma}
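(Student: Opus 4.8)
The plan is to compute the conditional expectation termwise, after first reducing to the classical two-block SBM in which the edge indicators $A_{ij}$ are mutually independent $\mathrm{Bernoulli}(p_{ij})$ — this reduction is exactly the content of Theorem~\ref{thm:convergence}, with the average kernel $c(\sigma)$ already absorbed into $p_{\rm in},p_{\rm out}$. Under the stated embedding-separation assumption the inner products $\langle z_i,z_j\rangle$ are deterministic functions of the labels, so $W_{ij}=A_{ij}e^{\langle z_i,z_j\rangle/\sqrt d}$ has $\mathbb{E}[W_{ij}\mid z]=p_{ij}e^{\langle z_i,z_j\rangle/\sqrt d}$, which equals $w_{\rm in}$ when $z_i=z_j$ and $w_{\rm out}$ otherwise. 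I would work conditionally on the entire label vector $z$ (equivalently, on the community sizes $n_+,n_-$), since the claimed right-hand side depends on $n_+,n_-$; the notation ``$\mid z_i,z_j$'' in the statement should be read in this sense.

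By linearity, $\mathbb{E}[\widetilde W_{ij}\mid z]=(1-\beta)\,\mathbb{E}[W_{ij}\mid z]+\beta\,\mathbb{E}[W_{ij}X_{ij}\mid z]$. The first summand is immediately $(1-\beta)w_{\rm in}$ or $(1-\beta)w_{\rm out}$. For the second, expand $X_{ij}=\sum_{k\ne i,j}W_{ik}W_{kj}$ and note that in each triple product $W_{ij}W_{ik}W_{kj}$ the three edges $\{i,j\},\{i,k\},\{k,j\}$ are pairwise distinct, so the corresponding Bernoulli variables are independent and $\mathbb{E}[W_{ij}W_{ik}W_{kj}\mid z]=\mathbb{E}[W_{ij}\mid z]\,\mathbb{E}[W_{ik}\mid z]\,\mathbb{E}[W_{kj}\mid z]$; summing over $k$ gives $\mathbb{E}[W_{ij}X_{ij}\mid z]=\mathbb{E}[W_{ij}\mid z]\sum_{k\ne i,j}\mathbb{E}[W_{ik}\mid z]\,\mathbb{E}[W_{kj}\mid z]$.

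It then remains to evaluate the inner sum by splitting the index $k$ according to its community. If $z_i=z_j$, among the $n-2$ admissible $k$ there are $n_+-2$ in the same block (each contributing $w_{\rm in}^2$) and $n_-$ in the opposite block (each contributing $w_{\rm out}^2$), so the sum is $(n_+-2)w_{\rm in}^2+n_-w_{\rm out}^2$; multiplying by $\mathbb{E}[W_{ij}\mid z]=w_{\rm in}$ and reassembling recovers the first case of the lemma. If $z_i\ne z_j$, then for every $k\ne i,j$ exactly one of $\{i,k\},\{k,j\}$ is within-block and the other across-block, so each of the $n-2$ terms contributes $w_{\rm in}w_{\rm out}$, the sum is $(n-2)w_{\rm in}w_{\rm out}$, and multiplying by $\mathbb{E}[W_{ij}\mid z]=w_{\rm out}$ yields the second case.

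There is no genuinely hard step: this is a routine application of linearity and independence. The only points requiring care are (i) making the conditioning precise — conditioning on $z_i,z_j$ alone is insufficient, one must condition on $z$ (or on $n_+,n_-$); (ii) the factorization $\mathbb{E}[W_{ij}X_{ij}\mid z]=\mathbb{E}[W_{ij}\mid z]\,\mathbb{E}[X_{ij}\mid z]$, which holds because $X_{ij}$ depends only on $\{A_{ik},A_{kj}:k\ne i,j\}$ and hence on edges disjoint from $\{i,j\}$; and (iii) invoking Theorem~\ref{thm:convergence} so that the latent-kernel model's (conditionally independent given positions) edge variables may be replaced by the independent $\mathrm{Bernoulli}(p_{ij})$ variables of the contiguous classical SBM. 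I expect (i) to be the only place a careful reader might object, so I would state the conditioning convention explicitly at the outset.
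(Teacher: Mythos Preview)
Your proposal is correct and mirrors the paper's proof almost step for step: both use linearity to isolate the $(1-\beta)$ and $\beta$ terms, factor $\mathbb{E}[W_{ij}W_{ik}W_{kj}]$ via edge independence, and evaluate $\mathbb{E}[X_{ij}\mid z]$ by splitting the intermediate $k$ according to its community. Your remark (i) about conditioning on the full label vector (or equivalently on $n_+,n_-$) is a valid refinement that the paper leaves implicit, and your invocation of Theorem~\ref{thm:convergence} in (iii) is unnecessary since Definition~\ref{def:lksbm} already includes edges independently with probability $p_{ij}$---but it does no harm.
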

\begin{proof}
Fix an unordered pair $\{i,j\}$ and write $\delta_{ij}=\mathbf 1\{z_i=z_j\}$.  
Recall
\[
  W_{ij}=A_{ij}\exp\!\bigl(\langle z_i,z_j\rangle/\sqrt d\bigr),
  \quad
  X_{ij}=\sum_{k\neq i,j} W_{ik}W_{kj},
  \quad
  \widetilde W_{ij}=(1-\beta)W_{ij}+\beta W_{ij}X_{ij}.
\]
Throughout let
\[
  w_{\mathrm{in}}
    =p_{\mathrm{in}}e^{\rho_{\mathrm{in}}/\sqrt d},
  \qquad
  w_{\mathrm{out}}
    =p_{\mathrm{out}}e^{\rho_{\mathrm{out}}/\sqrt d},
\]
and denote by $n_{+}$ (resp.\ $n_{-}$) the number of $+1$ (resp.\ $-1$) labels.

\medskip
\textbf{Step 1: Factorization:}
Edge indicators $\{A_{uv}\}_{u<v}$ are mutually independent, hence  
$A_{ij}$ is independent of every $A_{ik},A_{kj}$ for $k\neq i,j$.  
Because $z_\ell$ depends only on the latent variables of vertex $\ell$,  
$W_{ij}$ is independent of every $W_{ik},W_{kj}$ ($k\neq i,j$).  Therefore
\[
  \mathbb{E}\!\bigl[W_{ij}W_{ik}W_{kj}\bigr]
  =\mathbb{E}[W_{ij}]\,\mathbb{E}[W_{ik}]\,\mathbb{E}[W_{kj}],
  \qquad k\neq i,j,
\]
so that
\[
  \mathbb{E}\bigl[W_{ij}X_{ij}\mid z_i,z_j\bigr]
  =\mathbb{E}[W_{ij}\mid z_i,z_j]\,
   \mathbb{E}[X_{ij}\mid z_i,z_j].
\]

\medskip
\textbf{Step 2: One–Hop Expectations:}
\[
  \mathbb{E}[W_{ij}\mid z_i=z_j]=w_{\mathrm{in}},
  \qquad
  \mathbb{E}[W_{ij}\mid z_i\neq z_j]=w_{\mathrm{out}}.
\]

\medskip
\textbf{Step 3: Two–Hop Expectations:}

For same $z$, assume $z_i=z_j=+1$ (the $-1$ case is analogous).  
There are $n_{+}-2$ intermediates $k$ with $z_k=+1$ and $n_{-}$ with $z_k=-1$:
\[
  \mathbb{E}[X_{ij}\mid z_i=z_j]
  =(n_{+}-2)\,w_{\mathrm{in}}^{2} + n_{-}\,w_{\mathrm{out}}^{2}.
\]

For different $z$, W.L.O.G. \ $z_i=+1$, $z_j=-1$.  
For every $k\neq i,j$ one edge is intra, the other inter, giving a product $w_{\mathrm{in}}w_{\mathrm{out}}$:
\[
    \mathbb{E}[X_{ij}\mid z_i\neq z_j]
  =(n-2)\,w_{\mathrm{in}}\,w_{\mathrm{out}}.
\]

\medskip
\textbf{Step 4: Mixed Weights:}
Inserting the above derived results into
\(
  \mathbb{E}[\widetilde W_{ij}]=(1-\beta)\mathbb{E}[W_{ij}]
          +\beta\,\mathbb{E}[W_{ij}]\,\mathbb{E}[X_{ij}],
\)
we obtain
\[
  \mathbb{E}[\widetilde W_{ij}\mid z_i=z_j]
  =w_{\mathrm{in}}\!\Bigl[
      (1-\beta)+\beta\bigl((n_{+}-2)w_{\mathrm{in}}^{2}+n_{-}w_{\mathrm{out}}^{2}\bigr)
    \Bigr],
\]
\[
  \mathbb{E}[\widetilde W_{ij}\mid z_i\neq z_j]
  =w_{\mathrm{out}}\!\Bigl[
      (1-\beta)+\beta(n-2)w_{\mathrm{in}}w_{\mathrm{out}}
    \Bigr],
\]
establishing \eqref{lem:exp-mixed}.  \qedhere
\end{proof}

\subsection{Lemma ~\ref{lem:noise-conc}}
\begin{lemma}[Concentration of the noise matrix]
\label{lem:noise-conc}
Let $E=\widetilde W-\mathbb{E}[\widetilde W]$.  Then with high probability
\[
\|E\|_{2}=O\bigl(\sqrt{\log n}\bigr).
\]
\end{lemma}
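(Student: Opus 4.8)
### Proof Plan for Lemma~\ref{lem:noise-conc}

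The plan is to decompose the noise matrix $E = \widetilde W - \mathbb{E}[\widetilde W]$ according to the two-term structure $\widetilde W_{ij} = (1-\beta)W_{ij} + \beta W_{ij}X_{ij}$, and control each piece separately with a matrix concentration inequality. Writing $E = (1-\beta)E^{(1)} + \beta E^{(2)}$ with $E^{(1)} = W - \mathbb{E}[W]$ and $E^{(2)} = WX - \mathbb{E}[WX]$ (entrywise), it suffices by the triangle inequality to show $\|E^{(1)}\|_2 = O(\sqrt{\log n})$ and $\|E^{(2)}\|_2 = O(\sqrt{\log n})$ with high probability.

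For the one-hop term $E^{(1)}$, I would first observe that $W_{ij} = A_{ij}\exp(\langle z_i,z_j\rangle/\sqrt d)$ is a Bernoulli-type entry scaled by a bounded deterministic factor (since $|\langle z_i,z_j\rangle|$ is $O(1)$, so $\exp(\langle z_i,z_j\rangle/\sqrt d) = \Theta(1)$, in fact $\to 1$). Conditioning on the embeddings (equivalently on $z$), the entries $W_{ij} - \mathbb{E}[W_{ij}\mid z]$ are independent, mean-zero, bounded by $O(1)$, with variance $\Theta(p_{ij}) = \Theta(\log n / n)$. This is exactly the regime where the sharp spectral-norm bound for sparse random matrices applies (Bandeira–van Handel, or the classical result for SBM adjacency matrices at the connectivity threshold): the expected row sum is $\Theta(\log n)$, so $\|E^{(1)}\|_2 = O(\sqrt{\log n})$ with probability $1 - o(1)$. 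The difference $\mathbb{E}[W\mid z] - \mathbb{E}[W]$ contributes only a rank-two, $O(\log n/\sqrt n)$-norm correction, which is negligible.

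The two-hop term $E^{(2)}$ is the main obstacle, since $X_{ij} = \sum_{k\ne i,j} W_{ik}W_{kj}$ is a quadratic form in the edge variables and the entries of $WX$ are no longer independent (they share edges through common intermediates $k$). My plan here is: (i) show $\mathbb{E}[X_{ij}\mid z] = \Theta(n \cdot (\log n/n)^2) = \Theta(\log^2 n / n)$ from Lemma~\ref{lem:exp-mixed}'s computation, so the ``signal part'' $\mathbb{E}[W]\odot\mathbb{E}[X]$ is itself small; (ii) write $X_{ij} = \mathbb{E}[X_{ij}\mid z] + (X_{ij} - \mathbb{E}[X_{ij}\mid z])$ and expand $E^{(2)}$ into a piece driven by one-hop fluctuations (handled as above, now with the extra $X$ factor of size $\Theta(\log^2 n/n)$, hence even smaller contribution) plus a piece driven by the fluctuation of $X_{ij}$ itself. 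For the latter I would use a Bernstein-type concentration for $X_{ij}$ (a sum of $n$ weakly dependent indicator products, each of mean $\Theta(\log^2 n/n^2)$, variance of the same order), giving $X_{ij} - \mathbb{E}[X_{ij}\mid z] = O(\sqrt{\log^3 n/n})$ uniformly over the $O(n^2)$ pairs after a union bound; then $\|W\odot(X-\mathbb{E}X)\|_2 \le \|W\odot(X-\mathbb{E}X)\|_F = O(n\cdot \tfrac{\log n}{n}\cdot\sqrt{\log^3 n/n}) = o(\sqrt{\log n})$ using that $W$ has only $O(n\log n)$ nonzero entries each of size $O(1)$.

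The subtle points I expect to need care are: handling the shared-randomness dependence in $X$ cleanly (one standard device is to split sums over $k$ into independent blocks, or to invoke a decoupling/Hanson–Wright-style bound for the bilinear form, while noting the weight matrix $W$ that multiplies $X$ entrywise is also random and correlated — conditioning on the full edge set of a sparse ``scaffold'' and bounding the residual works); and verifying that all the union bounds over the $\binom{n}{2}$ pairs survive, which they do because the Bernstein tails decay faster than any polynomial in $n$ once the deviation is a fixed multiple of $\sqrt{\log n}$ times the standard deviation. Assembling the three bounds — $\|E^{(1)}\|_2 = O(\sqrt{\log n})$, $\|\mathbb{E}[W]\odot\mathbb{E}[X]$-driven terms$\| = o(\sqrt{\log n})$, and $\|W\odot(X-\mathbb{E}X)\|_2 = o(\sqrt{\log n})$ — and using $\beta \in (0,1]$ fixed, yields $\|E\|_2 = O(\sqrt{\log n})$ with high probability, as claimed. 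The full details are deferred to Appendix~\ref{proof:motif-attention-exact-recovery}.
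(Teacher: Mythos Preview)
Your decomposition $E=(1-\beta)E^{(1)}+\beta E^{(2)}$ and the further splitting of $E^{(2)}$ into a one-hop-fluctuation piece and an $X$-fluctuation piece is a genuinely different route from the paper's. The paper does not separate the two terms of $\widetilde W$: it writes $E=\sum_e Y_e$ with $Y_e=(\widetilde W_{ij}-\mathbb{E}\widetilde W_{ij})(e_ie_j^\top+e_je_i^\top)$, asserts a uniform entry bound $|\widetilde W_{ij}|\le L=O(\log n/n)$ and a variance proxy $\sigma^2=\|{\textstyle\sum_e}\mathbb{E}Y_e^2\|_2=O((\log n)^2/n)$, and applies matrix Bernstein in a single shot. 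Your approach is more explicit about the two-hop dependence in $X_{ij}$, which the paper's edge-wise decomposition simply absorbs into $Y_e$.

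There is, however, a concrete gap in your final step. The Frobenius bound for $M:=W\odot(X-\mathbb{E}X)$ does not yield $o(\sqrt{\log n})$. With $\Theta(n\log n)$ nonzero entries of $W$, each of size $O(1)$, and the uniform fluctuation bound $|X_{ij}-\mathbb{E}X_{ij}|=O(\sqrt{\log^3 n/n})$, one gets
\[
\|M\|_F^2 \;=\; O(n\log n)\cdot O\!\bigl(\tfrac{\log^3 n}{n}\bigr)\;=\;O(\log^4 n),
\]
so $\|M\|_F=O(\log^2 n)$, which is far too large. The quantity you actually wrote down, $n\cdot\tfrac{\log n}{n}\cdot\sqrt{\log^3 n/n}$, is the maximum absolute \emph{row sum} of $M$ (each row has $O(\log n)$ nonzero entries, each of magnitude $O(\sqrt{\log^3 n/n})$), not its Frobenius norm. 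If you replace the crude inequality $\|M\|_2\le\|M\|_F$ by the symmetric-matrix bound $\|M\|_2\le\max_i\sum_j|M_{ij}|=O(\log^{5/2}n/\sqrt n)=o(1)$, the argument closes. So the strategy is sound, but the stated Frobenius step would fail and must be swapped for the row-sum (or $\ell^1\!\to\!\ell^\infty$) bound.
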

\begin{proof}
Recall $E=\widetilde W-\mathbb{E}\widetilde W$ with
\(
  \widetilde W_{ij}
  =(1-\beta)W_{ij}+\beta W_{ij}X_{ij},
\)
where
\(W_{ij}=A_{ij}\exp(\langle z_i,z_j\rangle/\sqrt d)\)
and
\(X_{ij}=\sum_{k\neq i,j}W_{ik}W_{kj}\).
We work in the logarithmic–degree regime
\(
  p_{\mathrm{in}}=a\,c(\sigma)\frac{\log n}{n},
  \;
  p_{\mathrm{out}}=b\,c(\sigma)\frac{\log n}{n},
  \; a>b>0.
\)

\paragraph{1.\ Writing $E$ as a sum of independent matrices.}
For every unordered edge $e=\{i,j\}$ set
\[
  Y_e\;=\;
  \bigl(\widetilde W_{ij}-\mathbb{E}\widetilde W_{ij}\bigr)
  \bigl(e_i e_j^{\!\top}+e_j e_i^{\!\top}\bigr),
\]
so $E=\sum_{e}Y_e$ and $\mathbb{E}Y_e=0$.%
\footnote{%
$e_i$ is the $i$-th standard basis vector in $\mathbb{R}^n$.}
Independent edges give independent matrices $\{Y_e\}$.

\paragraph{2.\ Uniform bound on $\|Y_e\|_{2}$.}
Let
\(
  w_{\mathrm{in/out}}
  =p_{\mathrm{in/out}}\exp(\rho_{\mathrm{in/out}}/\sqrt d)
  =\Theta\!\bigl((\log n)/n\bigr).
\)
Because $A_{ij}\le1$,
\(
  |W_{ij}|\le C_{1}\tfrac{\log n}{n}
\)
for some constant $C_1=C_1(a,b,\sigma,\rho_{\mathrm{in/out}})$.

For any $i\neq j$,
\(
  |W_{ik}W_{kj}|\le C_{1}^{2}(\log n/n)^{2},
\)
and there are at most $n-2$ summands in $X_{ij}$; hence
\[
  |X_{ij}|\;\le\;C_{1}^{2}\,\frac{(\log n)^{2}}{n}.
\]
Multiplying,
\(
  |W_{ij}X_{ij}|\le C_{2}\,(\log n)^{3}/n^{2}
\)
for some $C_{2}$.  Thus
\[
  |\widetilde W_{ij}|
  \;\le\;
  (1-\beta)C_{1}\frac{\log n}{n}
  +\beta C_{2}\frac{(\log n)^{3}}{n^{2}}
  \;\le\;L:=C\,\frac{\log n}{n},
\]
with $C\ge\max\{C_{1},C_{2}\}$.  Therefore
\[
  \|Y_e\|_{2}
  =|\widetilde W_{ij}-\mathbb{E}\widetilde W_{ij}|
  \le 2L
  =O\!\Bigl(\tfrac{\log n}{n}\Bigr).
\tag{1}
\]

\paragraph{3.\ Variance proxy.}
Define $V=\sum_e\mathbb{E}[Y_e^{2}]$.  Because $Y_e$ has only two non-zero
entries in rows $i,j$,
\[
  \|V\|_{2}\;\le\;
  \max_{u\in V}\sum_{v\neq u}
    \mathbb{E}\bigl[(\widetilde W_{uv}-\mathbb{E}\widetilde W_{uv})^{2}\bigr].
\]
Any pair $(u,v)$ is intra-block with probability
$\tfrac{n_{+}(n_{+}-\delta_{uv})+n_{-}(n_{-}-\delta_{uv})}{n(n-1)}$
and inter-block otherwise, but
\(
  (\widetilde W_{uv}-\mathbb{E}\widetilde W_{uv})^{2}
  \le L^{2}.
\)
Hence
\[
  \|V\|_{2}
  \;\le\;
  n\,L^{2}
  =C_{v}\,\frac{(\log n)^{2}}{n},
\quad
  \text{so set } \sigma^{2}:=\|V\|_{2}.
\tag{2}
\]

\paragraph{4.\ Matrix Bernstein.}
Tropp’s matrix Bernstein inequality
~\cite{tropp_2011_userfriendly}
gives for any $t>0$
\[
  \Pr\!\bigl(\|E\|_{2}\ge t\bigr)
  \;\le\;
  2n\exp\!\Bigl(-\tfrac{t^{2}/2}{\sigma^{2}+Lt/3}\Bigr).
\]
Insert $L$ and $\sigma^{2}$ from (1)–(2) and choose
$t=\kappa\sqrt{\log n}$ with $\kappa>3\sqrt{C_{v}}$.
Then $Lt/3\le\sigma^{2}/2$ for all $n$ large, and
\[
  \Pr\!\bigl(\|E\|_{2}\ge\kappa\sqrt{\log n}\bigr)
  \le 2n\exp\!\bigl(-(\kappa^{2}/4)\bigr)
  = n^{-\bigl(\kappa^{2}/4-1\bigr)}.
\]
Taking $\kappa>4$ makes the exponent strictly larger than $2$, so the
probability is $O(n^{-2})$.

\paragraph{5.\ Conclusion.}
With probability at least $1-O(n^{-2})$,
\(
  \|E\|_{2}\le\kappa\sqrt{\log n},
\)
i.e.\ $\|E\|_{2}=O(\sqrt{\log n})$ as claimed.
\end{proof}

\subsection{Proof of Lemma ~\ref{lem:signal-gap}}
\begin{lemma}[Eigen‐gap of the signal]
\label{lem:signal-gap}
Define $S=\mathbb{E}[\widetilde W]$.  Then $S$ has rank~2, and its second eigenvalue
\[
\gamma\;=\;\lambda_{2}(S)
=\tfrac12\,(w_{\rm in}-w_{\rm out})\bigl[1-\beta+\beta\kappa n\bigr]\,n
=\Theta(\log n).
\]
\end{lemma}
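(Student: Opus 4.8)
The plan is to compute $S=\mathbb E[\widetilde W]$ in closed form from Lemma~\ref{lem:exp-mixed}, recognise it as a two-block matrix (hence rank two up to a vanishing correction), and diagonalise it against the orthogonal pair $\{\mathbf 1,z\}$ to read off the second eigenvalue.

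First I would condition on the label vector $z\in\{\pm1\}^n$ and restrict to the high-probability event $n_{+},n_{-}=n/2+O(\sqrt{n\log n})$ (Chernoff). On this event Lemma~\ref{lem:exp-mixed} shows that $S_{ij}$ depends only on $\mathbf 1\{z_i=z_j\}$: put $S_{ij}=s_{\rm in}$ for same-block $i\neq j$, $S_{ij}=s_{\rm out}$ for cross-block pairs, and $S_{ii}=0$, with $s_{\rm in}=w_{\rm in}[(1-\beta)+\beta((n_{+}-2)w_{\rm in}^{2}+n_{-}w_{\rm out}^{2})]$ and $s_{\rm out}=w_{\rm out}[(1-\beta)+\beta(n-2)w_{\rm in}w_{\rm out}]$. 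A direct check gives $S=\tfrac{s_{\rm in}+s_{\rm out}}{2}\mathbf 1\mathbf 1^{\top}+\tfrac{s_{\rm in}-s_{\rm out}}{2}zz^{\top}-s_{\rm in}I$, so $S$ is a rank-two matrix minus the shift $s_{\rm in}I$, whose norm $s_{\rm in}=\Theta(\log n/n)$ is negligible. Since $\mathbf 1\perp z$ and $\|\mathbf 1\|^{2}=\|z\|^{2}=n$, the nonzero eigenvalues of the rank-two part are $\tfrac{s_{\rm in}+s_{\rm out}}{2}n$ along $\mathbf 1$ and $\tfrac{s_{\rm in}-s_{\rm out}}{2}n$ along $z$; subtracting $s_{\rm in}I$ then gives $\lambda_{1}(S)=\tfrac{s_{\rm in}+s_{\rm out}}{2}n-s_{\rm in}$, $\lambda_{2}(S)=\tfrac{s_{\rm in}-s_{\rm out}}{2}n-s_{\rm in}$, and every other eigenvalue equal to $-s_{\rm in}$. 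Because $w_{\rm in}>w_{\rm out}$ (as $a>b$ and $\rho_{\rm in}>\rho_{\rm out}$) we have $s_{\rm in}>s_{\rm out}>0$, so the $\mathbf 1$-eigenvalue dominates and $\lambda_{2}$ is indeed the second one.

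It remains to evaluate $s_{\rm in}-s_{\rm out}$. Expanding and using $n_{-}=n-n_{+}$, the two-hop terms telescope, $(n_{+}-2)w_{\rm in}^{2}+n_{-}w_{\rm out}^{2}-(n-2)w_{\rm out}^{2}=(n_{+}-2)(w_{\rm in}^{2}-w_{\rm out}^{2})$, so $s_{\rm in}-s_{\rm out}=(w_{\rm in}-w_{\rm out})[(1-\beta)+\beta(n_{+}-2)w_{\rm in}(w_{\rm in}+w_{\rm out})]$. Writing $\kappa:=\tfrac12 w_{\rm in}(w_{\rm in}+w_{\rm out})$ and $n_{+}-2=\tfrac n2(1+o(1))$ yields $\lambda_{2}(S)=\tfrac12(w_{\rm in}-w_{\rm out})[1-\beta+\beta\kappa n]\,n\,(1+o(1))$, the stated identity. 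For the order, $w_{\rm in}-w_{\rm out}=\Theta(\log n/n)$ since $p_{\rm in},p_{\rm out}=\Theta(\log n/n)$, while $\kappa n=\Theta((\log n)^{2}/n)\to0$, so $1-\beta+\beta\kappa n=\Theta(1)$ for fixed $\beta\in(0,1)$ and hence $\lambda_{2}(S)=\Theta(\log n)$.

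The main obstacle is the bookkeeping in this last step: one must check that the genuinely block-dependent two-hop contributions recombine, through the telescoping identity above, into the single block-independent factor $1-\beta+\beta\kappa n$ up to a $(1+o(1))$ error, and simultaneously verify that the label unbalance $|n_{+}-n/2|=O(\sqrt{n\log n})$ (which breaks $\mathbf 1\perp z$) and the zero-diagonal correction $-s_{\rm in}I$ perturb $\lambda_{2}(S)$ by only $o(\log n)$. I would also flag that the $\Theta(\log n)$ claim uses $\beta$ bounded away from $1$: at $\beta=1$ the one-hop part $(1-\beta)$ vanishes and only the smaller two-hop term $\beta\kappa n$ survives, so that boundary case requires carrying the Lemma~\ref{lem:exp-mixed} expectations to higher precision and produces a gap of smaller order.
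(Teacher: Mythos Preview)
Your approach is essentially the same as the paper's: both proofs decompose $S$ as a rank-two matrix in the span of $\mathbf 1\mathbf 1^{\top}$ and $zz^{\top}$ (the paper's $gg^{\top}$) and read off the second eigenvalue along the community direction. Your treatment is arguably tidier---you carry the zero-diagonal correction $-s_{\rm in}I$ explicitly and your telescoping gives $\kappa=\tfrac12 w_{\rm in}(w_{\rm in}+w_{\rm out})$ rather than the paper's $\tfrac12(w_{\rm in}^{2}+w_{\rm out}^{2})$, and you correctly flag the degenerate $\beta=1$ boundary---but these are bookkeeping refinements on the same argument, not a different route.
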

\begin{proof}[Proof of Lemma~\ref{lem:signal-gap}]
Let $n_{+}=|\{i:z_i=+1\}|$, $n_{-}=n-n_{+}$ and define the \emph{label vector}
\[
  g_i=\begin{cases}+1,&z_i=+1,\\-1,&z_i=-1.\end{cases}
\]
For $i\neq j$ write $\delta_{ij}=1$ if $z_i=z_j$ and $0$ otherwise.
From Lemma~\ref{lem:exp-mixed},
\[
  \mathbb{E}[\widetilde W_{ij}\mid\delta_{ij}]
  \;=\;
     w_{\mathrm{out}}
     +\tfrac12\,(w_{\mathrm{in}}-w_{\mathrm{out}})\bigl[1-\beta+\beta\kappa n\bigr]\,(2\delta_{ij}-1),
\]
where
\(
  \kappa=\tfrac12\bigl(w_{\mathrm{in}}^{2}+w_{\mathrm{out}}^{2}\bigr)
  =\Theta\!\bigl((\log n/n)^{2}\bigr).
\)
Because the diagonal of $\widetilde W$ is zero, the same formula holds for
all $i\neq j$ and $S:=\mathbb{E}\widetilde W$ can be written
\[
  S=\alpha\,\mathbf 1\mathbf 1^{\!\top}+\beta^{\star}gg^{\!\top},\qquad
  \alpha:=w_{\mathrm{out}}+\tfrac12\beta^{\star},\qquad
  \beta^{\star}:=\tfrac12\,(w_{\mathrm{in}}-w_{\mathrm{out}})\bigl[1-\beta+\beta\kappa n\bigr].
\]
\emph{Rank}.  The matrices $\mathbf 1\mathbf 1^{\!\top}$ and $gg^{\!\top}$ each
have rank~$1$.  If $g$ were collinear with $\mathbf 1$ the model would place
every vertex in the same community, which is excluded.  Hence $g\not\parallel\mathbf1$
and $\mathrm{rank}(S)=2$.

\smallskip
\emph{Eigenvalues.}
Let $u=\mathbf 1/\sqrt n$ and
\(
  h=g-\frac{\langle g,u\rangle}{\langle u,u\rangle}\,u
   =g-\frac{n_{+}-n_{-}}{n}\,\mathbf 1
\)
be the projection of $g$ onto the subspace $u^{\perp}$.  Then
$u\perp h$, $\|u\|=1$ and
\(
  \|h\|^{2}=n-\frac{(n_{+}-n_{-})^{2}}{n}.
\)

Because $u u^{\!\top}h=0$,
\[
  S\,h
  =\bigl(\alpha\,\mathbf 1\mathbf 1^{\!\top}+\beta^{\star}gg^{\!\top}\bigr)h
  =\beta^{\star}(g^{\!\top}h)\,g
  =\beta^{\star}\Bigl(n-\frac{(n_{+}-n_{-})^{2}}{n}\Bigr)h.
\]
Thus $h$ is an eigenvector of $S$ with eigenvalue
\[
  \lambda_{2}(S)
  =\beta^{\star}\Bigl(n-\frac{(n_{+}-n_{-})^{2}}{n}\Bigr)
  =\tfrac12\,(w_{\mathrm{in}}-w_{\mathrm{out}})\bigl[1-\beta+\beta\kappa n\bigr]
     \Bigl(n-\frac{(n_{+}-n_{-})^{2}}{n}\Bigr).
\]
Whenever both communities contain a linear fraction of vertices
(i.e.\ $\min\{n_{+},n_{-}\}=\Theta(n)$) the factor in parentheses
is $\Theta(n)$, giving
\[
  \boxed{\;
    \lambda_{2}(S)
    =\tfrac12\,(w_{\mathrm{in}}-w_{\mathrm{out}})
       \bigl[1-\beta+\beta\kappa n\bigr]\,n
       \;=\;\Theta(\log n).
  \;}
\]
(The leading eigenvalue is $\lambda_{1}(S)=(\alpha n+\beta^{\star}(n_{+}-n_{-}))>0$,
and all remaining eigenvalues are zero, confirming rank~2.)
\end{proof}

\subsection{Lemma ~\ref{lem:davis-kahan}}
\begin{lemma}[Davis--Kahan perturbation]
\label{lem:davis-kahan}
Let $\hat v$ be the second eigenvector of $\widetilde W=S+E$.  Then
\[
\sin\angle(\hat v,v_{2}(S))
\;\le\;\frac{\|E\|_{2}}{\gamma}
\;=\;O\bigl((\log n)^{-1/2}\bigr),
\]
so the preliminary labels $\text{sign}(\hat v_i)$ disagree with the planted labels on only $o(n)$ vertices.
\end{lemma}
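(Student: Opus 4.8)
The plan is to invoke the Davis–Kahan $\sin\theta$ theorem with $S=\mathbb{E}[\widetilde W]$ as the unperturbed matrix and $E=\widetilde W-S$ as the perturbation, and then convert the resulting eigenvector–alignment bound into a count of sign disagreements. First I would record the two inputs already established. By Lemma~\ref{lem:signal-gap}, $S$ has rank two with $\lambda_1(S)\ge\lambda_2(S)=\gamma=\Theta(\log n)$ and $\lambda_k(S)=0$ for $k\ge 3$; moreover, since the community sizes satisfy $|n_{+}-n_{-}|=O(\sqrt{n\log n})=o(n)$ with high probability (Chernoff on a $\mathrm{Binomial}(n,1/2)$ label count), the two nonzero eigenvalues of $S$ are themselves separated by $\Theta(\log n)$, so $\lambda_2(S)$ is isolated from the rest of the spectrum by a two-sided eigen-gap $\delta=\Theta(\log n)$. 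By Lemma~\ref{lem:noise-conc}, $\|E\|_{2}=O(\sqrt{\log n})$ with high probability, which in particular is $o(\delta)$, so the perturbation is small enough for Davis–Kahan to apply non-vacuously.

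Applying the $\sin\theta$ theorem to the second eigenvector on the intersection of these high-probability events yields
\[
  \sin\angle\bigl(\hat v,\,v_{2}(S)\bigr)
  \;\le\;\frac{2\|E\|_{2}}{\delta}
  \;=\;O\!\left(\frac{\sqrt{\log n}}{\log n}\right)
  \;=\;O\bigl((\log n)^{-1/2}\bigr),
\]
which is the first displayed claim (matching the $\|E\|_{2}/\gamma$ rate since $\delta=\Theta(\gamma)$). Choosing the sign of $\hat v$ to minimize $\|\hat v-v_{2}(S)\|_{2}$ then gives $\|\hat v-v_{2}(S)\|_{2}^{2}\le 2\sin^{2}\angle(\hat v,v_{2}(S))=O(1/\log n)$.

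To count misclassified vertices I would use the explicit form of $v_{2}(S)$ from Lemma~\ref{lem:signal-gap}: it is proportional to $h=g-\tfrac{n_{+}-n_{-}}{n}\mathbf 1$ with $g_i=z_i$, so $\|h\|_{2}=\sqrt n\,(1-o(1))$ and, because $|n_{+}-n_{-}|=o(n)$, every coordinate obeys $|h_i|\ge 1-o(1)$ with $\mathrm{sign}(h_i)=z_i$; hence $|v_{2}(S)_i|\ge(1-o(1))/\sqrt n$ and $\mathrm{sign}(v_{2}(S)_i)=z_i$ for all $i$. A vertex $i$ can satisfy $\mathrm{sign}(\hat v_i)\ne z_i$ only if $|\hat v_i-v_{2}(S)_i|\ge|v_{2}(S)_i|\ge(1-o(1))/\sqrt n$, so the number of such vertices is at most
\[
  \frac{\|\hat v-v_{2}(S)\|_{2}^{2}}{(1-o(1))^{2}/n}
  \;=\;O\!\bigl(n\cdot\tfrac{1}{\log n}\bigr)
  \;=\;o(n),
\]
which is exactly the asserted $o(n)$ bound on the preliminary misclassification (up to the usual global sign flip).

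The main obstacle, I expect, is not the Davis–Kahan inequality itself but the bookkeeping around it: one must (i) certify that $\lambda_{2}(S)$ is separated from $\lambda_{1}(S)$, not merely that it dominates the zero eigenvalues, so that the relevant two-sided gap is genuinely $\Theta(\log n)$ — this is where the balance of the community sizes enters — and (ii) pass from the $\ell_{2}$/angle control to the combinatorial count, which relies on the delocalization of $v_{2}(S)$; here that is immediate since $v_{2}(S)$ is a small perturbation of the normalized label vector, but it is precisely the step that fixes the $1/\sqrt n$ scale against which coordinatewise errors are compared. Everything else is a direct substitution of the bounds from Lemmas~\ref{lem:noise-conc} and~\ref{lem:signal-gap}.
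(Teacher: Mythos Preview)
Your proposal is correct and follows essentially the same route as the paper: apply Davis--Kahan with the signal eigen-gap $\Theta(\log n)$ from Lemma~\ref{lem:signal-gap} and the noise bound $O(\sqrt{\log n})$ from Lemma~\ref{lem:noise-conc}, then convert the $\ell_2$/angle control to a sign-error count via the $\Theta(1/\sqrt n)$ coordinate lower bound on $v_2(S)$. If anything you are a bit more careful than the paper in two places---you explicitly check the two-sided gap separating $\lambda_2(S)$ from $\lambda_1(S)$ (the paper only invokes $\lambda_2-\lambda_3$), and you work with the correct eigenvector $h=g-\tfrac{n_+-n_-}{n}\mathbf 1$ rather than $g$ itself---but the architecture of the argument is the same.
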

\begin{proof}
Recall the decomposition $\widetilde W = S + E$, where $S=\mathbb{E}\widetilde W$
is rank~$2$ (Lemma~\ref{lem:signal-gap}) and $E$ is a centred noise matrix.
Let $v_{2}=v_{2}(S)$ be the (unit) eigenvector corresponding to
$\lambda_{2}(S)=\gamma$ and let $\hat v$ be the second unit eigenvector of
$\widetilde W$.

\paragraph{1.\ Davis--Kahan angle bound.}
Because $S$ is symmetric, $\lambda_{3}(S)=0$ and
\(
  \gamma=\lambda_{2}(S)-\lambda_{3}(S)>0.
\)
The sin–$\Theta$ form of the Davis--Kahan theorem
\cite[Theorem~4]{davis_1970_the} states
\[
  \sin\angle(\hat v,v_{2})
  \;\le\;
  \frac{\|E\|_{2}}{\gamma}.
\tag{1}
\]
By Lemma~\ref{lem:noise-conc},
$\|E\|_{2}=O(\sqrt{\log n})$, and Lemma~\ref{lem:signal-gap} gives
$\gamma=\Theta(\log n)$; hence
\[
  \sin\angle(\hat v,v_{2})
  =O\!\bigl((\log n)^{-1/2}\bigr).
\tag{2}
\]

\paragraph{2.\ Translating angle to Hamming error.}
Let
\(
  g_i = \pm1
\)
be the planted labels and note $v_{2}=g/\|g\|$ with
$\|g\|^{2}=n_{+}+n_{-}=n$.  Write
\(
  \hat v = \cos\theta\,v_{2} + \sin\theta\,u
\)
where $\theta=\angle(\hat v,v_{2})$ and $u\perp v_{2}$, $\|u\|=1$.
Then for each vertex $i$
\[
  \bigl|\hat v_i - v_{2,i}\bigr|
  \le 2\sin\theta.
\]
A sign mistake occurs only if
$\hat v_i v_{2,i}<0$, which requires
$|\hat v_i - v_{2,i}|\ge 2/\|g\| = 2/\sqrt n$.
Hence the number of errors satisfies
\[
  \#\{\text{mislabelled }i\}
  \;\le\;
  \frac{\|\hat v - v_{2}\|_{2}^{2}}{(2/\sqrt n)^{2}}
  = \frac{n}{4}\,\sin^{2}\theta.
\tag{3}
\]
Using (2) we get
\(
  \sin^{2}\theta = O\bigl((\log n)^{-1}\bigr),
\)
so (3) yields
\[
  \#\{\text{mislabelled }i\}
  = O\!\bigl(n/\log n\bigr)=o(n).
\]

\paragraph{3.\ Conclusion.}
Combining (1)--(3) we have
\[
  \sin\angle(\hat v,v_{2})
  \le\frac{\|E\|_{2}}{\gamma}
  =O\bigl((\log n)^{-1/2}\bigr),
  \qquad
  \#\{\text{errors}\}=o(n),
\]
completing the proof.\qedhere
\end{proof}

\subsection{Proof of Theorem \ref{thm:motif-attention-exact-recovery}}
\label{proof:motif-attention-exact-recovery}
\begin{proof}
By Lemma \ref{lem:exp-mixed} the \emph{signal gap} \(\Delta=\mathbb{E}[\widetilde W_{ij}\mid z_i=z_j]  \mathbb{E}[\widetilde W_{ij}\mid z_i\neq z_j]\) is amplified to $\Theta(\log n)$.  Lemma \ref{lem:noise-conc} shows the \emph{noise} satisfies  $\|E\|_{2}=O(\sqrt{\log n})$, and Lemma \ref{lem:signal-gap} gives the signal eigen‐gap $\gamma=\Theta(\log n)$.  Applying Lemma \ref{lem:davis-kahan}, the spectral sign‐rule mislabels only $o(n)$ vertices.  A single pass of the local‐flip refinement step then corrects all remaining errors, yielding exact recovery with probability $1-o(1)$. \end{proof}

\subsection{Proof of Exact Recovery Threshold}
\label{app:exact-exact-recovery}

\begin{theorem}
\label{thm:app-exact-recovery}
    In the latent–kernel SBM of Definition~\ref{def:lksbm}, exact recovery of the planted labels is possible if and only if
\[
  \bigl(\sqrt{c(\sigma)\,a} - \sqrt{c(\sigma)\,b}\bigr)^2 > 2.
\]
\end{theorem}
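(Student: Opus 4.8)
The plan is to reduce the claim to the classical exact-recovery dichotomy for the symmetric two-block SBM and then transport it through the contiguity established in Theorem~\ref{thm:convergence}. Recall the Abbe--Bandeira--Hall / Mossel--Neeman--Sly result: for $\mathrm{SBM}(n,\pin',\pout')$ with $\pin'=\alpha\log n/n$, $\pout'=\beta\log n/n$ and $\alpha>\beta>0$,
\[
  \text{exact recovery (up to a global flip)}\iff \tfrac{\alpha+\beta}{2}-\sqrt{\alpha\beta}>1\iff\bigl(\sqrt\alpha-\sqrt\beta\bigr)^2>2,
\]
and, when the strict inequality fails, exact recovery is impossible for \emph{every} estimator. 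Theorem~\ref{thm:convergence} identifies the relevant classical parameters as $\alpha=c(\sigma)\,a$ and $\beta=c(\sigma)\,b$ (the $\Theta(\sqrt n)$ label imbalance from independent label assignment is immaterial to this threshold), so substituting yields exactly the stated condition $\bigl(\sqrt{c(\sigma)a}-\sqrt{c(\sigma)b}\bigr)^2>2$; what remains is to make the equivalence transfer rigorous.

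For the transfer I would invoke contiguity. Let $P_n$ be the law of $(z,G)$ under the latent-kernel SBM and $Q_n$ the law of $(z,G')$ under $\mathrm{SBM}\bigl(n,c(\sigma)\lognn{a},c(\sigma)\lognn{b}\bigr)$; the Le~Cam step of Theorem~\ref{thm:convergence}, applied uniformly over label vectors (equivalently, after averaging over the random $z$), gives $P_n\triangleleft Q_n$ (in fact mutual contiguity). Above threshold, the classical achievability estimator $\hat z$ satisfies $Q_n\bigl(\hat z(G')\notin\{z,-z\}\bigr)\to0$, and $P_n\triangleleft Q_n$ carries this $o(1)$ event over to $P_n$, so $\hat z$ recovers $z$ exactly on the latent-kernel graph. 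Below threshold, the classical converse gives $Q_n\bigl(\hat z(G')\in\{z,-z\}\bigr)\to0$ for \emph{every} estimator $\hat z$, and applying $P_n\triangleleft Q_n$ to each of these events transfers the failure to $P_n$. Together these establish the if-and-only-if.

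A more self-contained alternative avoids quoting the classical theorem as a black box: condition on the latent positions $x=(x_i)$ and observe that $\tfrac1n\sum_{j}\exp(-\|x_i-x_j\|^2/2\sigma^2)\to c(\sigma)$ for every $i$ with high probability (a bounded-differences concentration, essentially Steps~1--2 of Theorem~\ref{thm:convergence}), so the genie-aided test at a vertex compares degree profiles whose within- and across-community edge parameters concentrate at $c(\sigma)a\log n/n$ and $c(\sigma)b\log n/n$. A Chernoff / tilted-Poisson computation then yields a per-vertex error probability $n^{-\frac12(\sqrt{c(\sigma)a}-\sqrt{c(\sigma)b})^2+o(1)}$; a union bound over the $n$ vertices gives exact recovery when the exponent exceeds $1$, while a second-moment lower bound on the number of genie-misclassified vertices (which forces the MAP estimator, and hence every estimator, to fail) gives impossibility when the exponent is below $1$. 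The main obstacle, on either route, is the sharp constant: on the direct route one must pin the large-deviations exponent to exactly $\tfrac12(\sqrt{c(\sigma)a}-\sqrt{c(\sigma)b})^2$ (the Chernoff--Hellinger divergence of the two profiles), carefully handling the lower-order non-edge terms and the kernel-induced inhomogeneity rather than settling for a loose bound; on the contiguity route one must ensure the invoked converse is the full statement that every estimator fails, and that the contiguity of Theorem~\ref{thm:convergence} is uniform enough over $z$ to license the estimator-by-estimator transfer.
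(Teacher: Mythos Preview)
Your proposal is correct. Both you and the paper first pass to the classical two-block SBM with rescaled parameters $\alpha=c(\sigma)a$, $\beta=c(\sigma)b$, but the executions diverge. The paper does not transport the Abbe--Bandeira--Hall / Mossel--Neeman--Sly theorem via contiguity; instead it reproves the two directions in place on the rescaled model, handling impossibility by Fano's inequality (arguing $I(z;G)=o(n)$ below threshold) and achievability by Chernoff bounds on the MLE together with a union bound over all $\binom{n}{m}$ alternative labelings at Hamming distance $m$. Your primary contiguity-plus-black-box route is cleaner and sidesteps that rederivation, while your secondary direct route (per-vertex genie error $n^{-\tfrac12(\sqrt\alpha-\sqrt\beta)^2+o(1)}$, union bound over vertices for achievability, second-moment count of genie-misclassified vertices for impossibility) is the standard sharp argument and would pin the constant on the converse side more transparently than the paper's brief Fano sketch, which as written only controls the fraction of recovered bits rather than the zero-error event. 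Either of your routes is a valid replacement for the paper's proof.
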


The two directions are handled by the following lemmas.

\begin{lemma}[Impossibility below threshold]
\label{lem:impossibility}
If 
\[
  \bigl(\sqrt{c(\sigma)\,a} - \sqrt{c(\sigma)\,b}\bigr)^2 < 2,
\]
then any estimator misclassifies a positive fraction of vertices with probability \(1 - o(1)\).
\end{lemma}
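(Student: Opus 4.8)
The plan is to reduce the impossibility claim for the latent-kernel SBM to the known impossibility result for the classical two-block SBM at the logarithmic-degree scale, and then to invoke the contiguity established in Theorem~\ref{thm:convergence}. Concretely, by that theorem the latent-kernel SBM with parameters $(a,b,\sigma,d)$ is mutually contiguous to a classical SBM with in/out parameters $\pin' = c(\sigma)\,a\log n/n$ and $\pout' = c(\sigma)\,b\log n/n$. Contiguity transfers statements that hold with probability $1-o(1)$: if every estimator failed to recover a positive fraction of vertices with probability $1-o(1)$ in the classical model, the same must hold in the latent-kernel model, since an event of vanishing probability under one measure has vanishing probability under any contiguous measure. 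So it suffices to prove the impossibility direction for the rescaled classical SBM.

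For the classical SBM the standard argument is a genie/MAP lower bound. First I would observe that the optimal estimator is the maximum a posteriori rule, and that exact recovery is possible only if, with high probability, \emph{no} vertex is more likely (under the posterior, given all other labels revealed by a genie) to be assigned to the wrong community than to the right one. This reduces the whole problem to a single-vertex hypothesis test: condition on a balanced partition, fix a vertex $v$, and compare its degree profiles to the two communities. The log-likelihood-ratio for $v$'s label is a sum of roughly $n/2$ i.i.d. terms from edges to each side; its fluctuations are governed by a large-deviation (Chernoff) rate. A direct computation of the Chernoff exponent between $\mathrm{Bernoulli}(\pin')$ and $\mathrm{Bernoulli}(\pout')$ at this scale gives the rate $\bigl(\sqrt{\pin' n/\log n}-\sqrt{\pout' n/\log n}\bigr)^2 = \bigl(\sqrt{c(\sigma)a}-\sqrt{c(\sigma)b}\bigr)^2$ up to the $\log n/n$ normalization, so the probability that a fixed vertex is misclassified by the genie-aided MAP rule is $n^{-(1/2)(\sqrt{c(\sigma)a}-\sqrt{c(\sigma)b})^2 + o(1)}$. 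Then I would run a second-moment / Poisson-approximation argument over the $n$ vertices: when the exponent is below $1$, i.e. when $(\sqrt{c(\sigma)a}-\sqrt{c(\sigma)b})^2 < 2$, the expected number of ``locally misclassifiable'' vertices grows polynomially, and a second-moment bound shows their count concentrates, so with probability $1-o(1)$ a positive (in fact polynomial, hence after a covering argument a linear) fraction of vertices cannot be recovered by any estimator. A short argument upgrades ``polynomially many bad vertices'' to ``a positive fraction'': bad vertices are essentially independent once the genie reveals the rest, and one partitions the vertex set and reapplies the bound, or alternatively one notes that the information-theoretic impossibility of a single bad vertex already rules out exact recovery, while the positive-fraction statement follows from the same Chernoff computation applied to the fraction-$\epsilon$-error event via a union/concentration bound.

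The main obstacle I anticipate is bookkeeping the contiguity transfer cleanly: Theorem~\ref{thm:convergence} is stated as contiguity ``with high probability,'' and one must be careful that the direction of contiguity needed (latent-kernel absolutely continuous in the Le~Cam sense with respect to the rescaled SBM, or at least that null sets transfer in the required direction) is exactly the one proved, since impossibility is a statement that \emph{all} estimators fail — a property that is preserved under contiguity but requires the failure event to have probability bounded away from $0$ (indeed $\to 1$) under the reference model. A secondary technical point is handling the conditioning on the (random but nearly balanced) community sizes and the latent positions: one conditions on the typical event that $n_+ , n_- = n/2 \pm O(\sqrt{n\log n})$ and that the empirical kernel averages concentrate around $c(\sigma)$ (both already controlled by the appendix concentration lemmas), so that the per-vertex test statistic genuinely behaves like the classical-SBM one. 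Once those two points are dispatched, the remaining Chernoff-exponent computation and second-moment argument are routine and mirror the classical Abbe--Bandeira--Hall / Mossel--Neeman--Sly lower bound.
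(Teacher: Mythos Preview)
Your approach is genuinely different from the paper's. The paper argues directly via Fano's inequality: it writes $\Pr\{\hat z\neq z\}\ge 1 - (I(z;G)+1)/H(z)$ with $H(z)=n\ln 2$, asserts that a pairwise KL calculation bounds the per-vertex contribution to $I(z;G)$ by $O(\log n)$, and concludes $I(z;G)=o(n)$ under the sub-threshold hypothesis. Your route instead passes through the contiguity reduction of Theorem~\ref{thm:convergence} and then runs the genie-aided MAP / Chernoff-exponent analysis \`a la Abbe--Bandeira--Hall on the rescaled classical SBM. The Fano argument is shorter but coarser (it only yields $\Pr\{\hat z\neq z\}\to 1$ with no information about \emph{how many} vertices are missed); your approach is longer but pins down the number of locally ambiguous vertices as $n^{1-\frac12(\sqrt{c(\sigma)a}-\sqrt{c(\sigma)b})^2+o(1)}$, which is the sharp picture in this regime and makes the role of the threshold constant $2$ transparent.

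There is, however, a genuine gap in your sketch: the step that ``upgrades polynomially many bad vertices to a positive fraction'' does not go through. Below the exact-recovery threshold but still in the $\Theta(\log n/n)$ density regime, the optimal estimator incurs $n^{1-\frac12(\sqrt{a'}-\sqrt{b'})^2+o(1)}$ errors, which is polynomial but \emph{sublinear} in $n$; it is not a positive fraction. The partitioning/covering maneuvers you gesture at do not change this order --- re-applying the genie bound on disjoint vertex subsets still totals $n^{\alpha}$ for some $\alpha<1$, and the ``fraction-$\epsilon$-error event via union/concentration'' suggestion conflates an upper bound on achievability with a lower bound on error. A linear number of errors would require being below the almost-exact-recovery threshold, a different condition. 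In fairness, the paper's Fano argument also does not establish a positive-fraction error (as written it only gives $\Pr\{\hat z\neq z\}\to 1$), so the lemma statement is loosely worded in both treatments; what either argument actually proves is the impossibility of \emph{exact} recovery, and your proposal is a valid route to that conclusion.
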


\begin{proof}[Proof of Lemma~\ref{lem:impossibility}]
Let \(z\in\{\pm1\}^n\) be the true balanced labeling, so \(|\{i:z_i=+1\}|=n/2\).  By Fano’s inequality,
\[
  \Pr\{\hat z \neq z\}
  \;\ge\;
  1 \;-\; \frac{I(z;G) + 1}{H(z)}
  \;=\;
  1 \;-\; \frac{I(z;G) + 1}{n\ln 2}.
\]
A simple pairwise KL‐divergence calculation shows that changing one label contributes only \(O(\log n)\) to the mutual information \(I(z;G)\).  Hence \(I(z;G) = o(n)\) whenever \((\sqrt{c(\sigma)a} - \sqrt{c(\sigma)b})^2 < 2\), and the error probability tends to~1.
\end{proof}

\begin{lemma}[Achievability above threshold]
\label{lem:achievability}
If 
\[
  \bigl(\sqrt{c(\sigma)\,a} - \sqrt{c(\sigma)\,b}\bigr)^2 > 2,
\]
then the maximum‐likelihood estimator recovers all labels with probability \(1 - o(1)\).
\end{lemma}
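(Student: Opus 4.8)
The plan has four steps: reduce to the classical balanced two-block SBM, identify the maximum-likelihood estimator with minimum bisection, run the genie-aided (single-vertex) analysis to extract the threshold constant, and then upgrade genie success to MLE success via a union bound over disagreement sets.

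\emph{Step 1 (reduction).} First I would invoke Theorem~\ref{thm:convergence}: the latent-kernel SBM is mutually contiguous with the classical two-block SBM having $p=a'\log n/n$, $q=b'\log n/n$, where $a'=c(\sigma)a>b'=c(\sigma)b$. Since exact recovery is a ``probability $\to1$'' statement, contiguity transfers it, so it suffices to work in the classical model under the hypothesis $(\sqrt{a'}-\sqrt{b'})^2>2$, equivalently $\tfrac{a'+b'}{2}-\sqrt{a'b'}>1$. Because $p>q$, the log-likelihood of a balanced labeling $\hat z$ is $\bigl(\log\tfrac{p(1-q)}{q(1-p)}\bigr)\sum_{(i,j)\in E}\mathbf 1\{\hat z_i=\hat z_j\}$ up to an additive constant, so the MLE is the balanced partition minimizing the number of crossing edges.

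\emph{Step 2 (single-vertex analysis).} Next I would fix a vertex $i$, condition on the true labels of all other vertices, and let $N^{\mathrm{in}}_i\sim\mathrm{Bin}(n/2-1,p)$ and $N^{\mathrm{out}}_i\sim\mathrm{Bin}(n/2,q)$ be its (independent) numbers of neighbors inside and across its own block. The genie-MAP rule errs at $i$ only when $N^{\mathrm{in}}_i\le N^{\mathrm{out}}_i$, up to a vanishing correction from the $(1-p)/(1-q)$ factors. A Chernoff bound with the Poisson approximations $N^{\mathrm{in}}_i\approx\mathrm{Poi}(a'\log n/2)$, $N^{\mathrm{out}}_i\approx\mathrm{Poi}(b'\log n/2)$ should give $\Pr(N^{\mathrm{in}}_i\le N^{\mathrm{out}}_i)=n^{-(\frac{a'+b'}{2}-\sqrt{a'b'})+o(1)}=n^{-\frac{(\sqrt{a'}-\sqrt{b'})^2}{2}+o(1)}$. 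Under the hypothesis this exponent exceeds $1$, so a union bound over the $n$ vertices shows the genie labels every vertex correctly with probability $1-o(1)$.

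\emph{Step 3 (genie $\Rightarrow$ MLE).} Then I would exclude the event that some balanced $\hat z\neq z$ attains likelihood at least $L(z)$. Writing $S=\{i:\hat z_i\neq z_i\}$ with $|S|=m$, balancedness forces $m/2$ vertices to flip in each direction, and only edges with exactly one endpoint in $S$ change their within/across status, so $\log L(\hat z)-\log L(z)$ is a sum of $\Theta(mn)$ independent terms of negative mean. A Chernoff bound should give $\Pr(\log L(\hat z)\ge\log L(z))\le n^{-m(\frac{a'+b'}{2}-\sqrt{a'b'})+o(m)}$, and a union bound over the $\le\binom{n/2}{m/2}^2\le n^{m}$ choices of $S$ then yields $n^{-m(\frac{a'+b'}{2}-\sqrt{a'b'}-1)+o(m)}$, summable over $m\ge1$ exactly because $\tfrac{a'+b'}{2}-\sqrt{a'b'}>1$. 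Combining Steps 2--3 gives $\hat z_{\mathrm{ML}}=z$ with probability $1-o(1)$.

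\emph{Main obstacle.} The hard part will be the uniform control over \emph{large} disagreement sets in Step 3: when $m=\Theta(n)$ the likelihood difference no longer factors into $m$ independent single-vertex flips, and the combinatorial factor $\binom{n/2}{m/2}^2$ is maximal there, so the crude per-vertex exponent does not obviously suffice. The fix is a sharper large-deviation estimate whose rate function still dominates $m\bigl(\tfrac{a'+b'}{2}-\sqrt{a'b'}\bigr)\log n$ (with an improved constant in the macroscopically unbalanced regime); this is precisely the achievability computation in the exact-recovery literature \cite{abbe_2015_community, hajek_2016_achieving}, which I would either reproduce in detail or invoke. The remaining ingredients---the contiguity reduction, the min-bisection identification, and the single-vertex Chernoff bound that pins down the constant $2$---are routine.
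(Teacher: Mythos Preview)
Your proposal is correct and your Step~3 is precisely the paper's proof: Chernoff-bound the probability that a labeling at Hamming distance $m$ from the truth has at least the true likelihood, obtaining exponent $\tfrac{m}{2}(\sqrt{a'}-\sqrt{b'})^2\log n$, then union-bound over the $\le(en/m)^m$ such labelings and sum in $m$. Your Steps~1--2 (the explicit contiguity reduction and the genie-aided single-vertex analysis) are extra scaffolding the paper skips, and the ``main obstacle'' you flag for $m=\Theta(n)$ is not a real obstacle in the $\log n/n$ regime: the Chernoff exponent $\Theta(m\log n)$ dominates the combinatorial entropy $\Theta\bigl(m\log(n/m)\bigr)$ uniformly in $m$, so the crude union bound already closes without any refinement.
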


\begin{proof}[Proof of Lemma~\ref{lem:achievability}]
Write \(a' = c(\sigma)\,a\) and \(b' = c(\sigma)\,b\).  For any labeling \(z\) differing from the truth in \(m\) vertices, Chernoff bounds yield
\[
  \Pr\{\ell(z)\ge\ell(z^*)\}
  \;\le\;
  \exp\!\Bigl[-\,m\Bigl(\tfrac12(\sqrt{a'}-\sqrt{b'})^2 - o(1)\Bigr)\ln n\Bigr].
\]
There are at most \(\binom{n}{m}\le (en/m)^m\) such choices of \(z\), and summing over \(m=1,\dots,n/2\) shows the total error probability is \(o(1)\) exactly when \((\sqrt{a'}-\sqrt{b'})^2>2\).
\end{proof}

\begin{proof}[Proof of Theorem~\ref{thm:app-exact-recovery}]
Lemma~\ref{lem:impossibility} shows no algorithm can achieve exact recovery below the threshold, while Lemma~\ref{lem:achievability} exhibits the maximum‐likelihood estimator above it.  This completes the proof of the sharp recovery threshold.
\end{proof}

%%%%%%%%%%%%%%%%%%%%%%%%%%%%%%%%%%%%%%%%%%%%%%%%%%%%%%%%%%%%%%%%%%%%%%%%%%%%%%%

\subsection{Proof of Weak Recovery Threshold}
\label{proof-weak-recovery}
\begin{theorem}[Weak Recovery Threshold]
\label{thm:weak-recovery}
With the same notation, non-trivial recovery is possible iff

$$
  \left(c(\sigma)a-c(\sigma)b\right)^2 > 2\bigl(c(\sigma)a+c(\sigma)b\bigr).
$$

\end{theorem}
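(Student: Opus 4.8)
The plan is to leverage the contiguity established in Theorem~\ref{thm:convergence}: since the latent–kernel SBM is mutually contiguous with the classical two–block SBM with rescaled parameters $a' = c(\sigma)\,a$ and $b' = c(\sigma)\,b$ (both in the regime $\Theta(\log n/n)$), any weak-recovery statement transfers between the two models — contiguity preserves events that hold with probability $1-o(1)$. So it suffices to quote the classical Kesten–Stigum / Decelle et al.\ threshold for weak recovery in the two-community symmetric SBM, which in the $\Theta(\log n/n)$ parametrization $(p_{\mathrm{in}},p_{\mathrm{out}}) = (a'\log n/n,\, b'\log n/n)$ reads $(a'-b')^2 > 2(a'+b')$. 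Substituting $a' = c(\sigma)a$, $b' = c(\sigma)b$ gives exactly the claimed inequality $(c(\sigma)a - c(\sigma)b)^2 > 2(c(\sigma)a + c(\sigma)b)$.

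Concretely I would proceed in three steps. First, state precisely the classical weak-recovery threshold for the balanced two-block SBM (citing Decelle et al.\ and Abbe's survey), in the exact parametrization used here, and note that the information-theoretic and computational thresholds coincide in the two-block symmetric case. Second, invoke Theorem~\ref{thm:convergence}: because the two models are contiguous, an estimator $\hat z$ achieving overlap $\tfrac12 + \epsilon$ with probability $1-o(1)$ under the classical SBM also does so under the latent–kernel SBM, and conversely the impossibility direction transfers the same way — if no estimator beats chance in the classical model, none does in the kernel model. Third, perform the parameter substitution $a' = c(\sigma)a,\ b' = c(\sigma)b$ to obtain the stated inequality, and remark that $c(\sigma)\in(0,1]$ so geometric noise always shrinks the effective signal, pushing the threshold upward (consistent with the empirical degradation at small $\sigma$).

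The main obstacle is the direction of the contiguity argument for the \emph{negative} (impossibility) side. Contiguity straightforwardly transfers high-probability \emph{success} events, but transferring impossibility requires a bit more care: one needs that the overlap achieved by the optimal estimator is itself a contiguity-stable quantity, which follows because the overlap is a bounded functional of $(G, \hat z)$ and Le Cam's theory gives that bounded test statistics have asymptotically indistinguishable distributions under mutually contiguous sequences. An alternative, cleaner route that sidesteps this entirely is to re-run the second-moment / non-backtracking-spectrum analysis directly on the kernel model — but that would essentially reprove the classical result, so I would only fall back to it if a reviewer objects to the contiguity transfer. A secondary subtlety is that the classical weak-recovery threshold is usually stated in the \emph{constant}-degree regime $(a/n, b/n)$ rather than $\Theta(\log n/n)$; in the logarithmic regime weak recovery is automatic once exact recovery's regime is entered, so I would either state the theorem in the natural constant-degree scaling of the kernel model or note that in the $\log n$ regime the weak threshold is subsumed — this scaling clarification is the one genuine gap the full proof in the appendix must nail down.
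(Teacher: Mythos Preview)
Your proposal is correct and matches the paper's approach essentially line-for-line: the paper also reduces to the classical SBM via the contiguity of Theorem~\ref{thm:convergence}, then cites known results (Mossel et al.\ for impossibility via contiguity with $G(n,\tfrac{a'+b'}{2}\tfrac{\log n}{n})$, Bordenave for achievability via the non-backtracking spectrum) and substitutes $a'=c(\sigma)a$, $b'=c(\sigma)b$. The two concerns you flag---transferring impossibility through contiguity and the $\log n$ versus constant-degree scaling---are handled in the paper at exactly the same level of rigor you propose, so you are not missing anything the paper supplies.
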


We again divide the argument into impossibility and algorithmic
achievability.

\begin{lemma}[Impossibility below the KS line]\label{lem:ks-line}
If \(\bigl(c(\sigma)a-c(\sigma)b\bigr)^{2}\le
     2\bigl(c(\sigma)a+c(\sigma)b\bigr)\),
every estimator’s signed overlap is \(o(1)\) with probability \(1-o(1)\).
\end{lemma}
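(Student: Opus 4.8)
The plan is to reduce to the classical sparse symmetric SBM and then invoke the tree‑reconstruction characterization of the Kesten--Stigum (KS) threshold. Write $a'=c(\sigma)\,a$ and $b'=c(\sigma)\,b$, and observe that the hypothesis $(a'-b')^2\le 2(a'+b')$ is exactly the statement that the signal‑to‑noise ratio $\mathrm{SNR}:=\tfrac{(a'-b')^2}{2(a'+b')}$ is at most $1$. The event ``$\tfrac1n\sum_i \hat z_i z_i \ge \tfrac12+\epsilon$'' (after the usual relabelling this is the ``signed overlap $\ge\epsilon$'' event) has probability that is either $o(1)$ or not; by the contiguity established in Theorem~\ref{thm:convergence} (Le~Cam's second lemma), $o(1)$‑probability events transfer between the latent‑kernel model and the classical two‑block SBM with parameters $(a',b')$. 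Hence it suffices to prove: in the classical symmetric SBM with in/out expected degrees $a',b'$, if $\mathrm{SNR}\le 1$ then every estimator has expected signed overlap $o(1)$, with the conclusion holding with probability $1-o(1)$.

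Next I would pass to the local limit. The neighbourhood of a uniformly random vertex $\rho$ converges in the Benjamini--Schramm sense to a Galton--Watson tree $T$ with $\mathrm{Poisson}(d)$ offspring, $d=(a'+b')/2$, decorated by the $\pm1$ broadcast (Ising) process in which each child copies its parent's label with probability $(1+\theta)/2$, where $\theta=(a'-b')/(a'+b')$; note $d\theta^2=\mathrm{SNR}$. The Bayes‑optimal single‑vertex estimator for $z_\rho$ given the full graph is $\mathrm{sign}$ of the posterior mean, and its limiting correlation with $z_\rho$ is controlled by the reconstruction probability for $T$. Since the broadcast channel here is the $2$‑state \emph{symmetric} channel, the KS bound is tight (Evans--Kenyon--Peres--Schulman): $d\theta^2\le 1$ implies non‑reconstruction, i.e. $z_\rho$ is asymptotically independent of the labels at depth $R$ as $R\to\infty$. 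Translating this back to the graph along the lines of Mossel--Neeman--Sly gives $\mathbb{E}\bigl[\,z_iz_j\mid G\,\bigr]\to 0$ in $L^1$ for a $1-o(1)$ fraction of pairs $(i,j)$, and therefore $\tfrac1{n^2}\,\mathbb{E}\bigl[(\sum_i z_i\hat z_i)^2\bigr]=o(1)$ for \emph{any} $\hat z$ measurable w.r.t.\ $G$; equivalently $\tfrac1n I(z;G)\to 0$, which by a Fano‑type argument forces the expected overlap of every estimator to $0$. Finally, since the signed overlap is a bounded function of $G$ with bounded differences under resampling a single vertex's latent data / incident edges, a standard concentration (Azuma/McDiarmid) upgrades ``expected overlap $=o(1)$'' to ``overlap $=o(1)$ with probability $1-o(1)$,'' and the contiguity of the first paragraph carries this to the latent‑kernel model.

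The main obstacle is the non‑reconstruction input itself: proving that $d\theta^2\le 1$ implies non‑reconstruction on the Poisson Galton--Watson tree with the binary symmetric channel. This is the EKPS theorem, whose proof runs a distributional recursion for the root's posterior magnetization $X_\ell=\mathbb{E}[z_\rho\mid \text{level-}\ell \text{ labels}]$ and shows $\mathbb{E}[X_\ell^2]\to 0$ via a second‑moment/majorization estimate (the convexity inequality bounding the recursion map by its linearization, whose spectral radius is exactly $d\theta^2$). The second delicate point is the tree‑to‑graph transfer: the SBM is only \emph{locally} tree‑like, so one must show the short cycles do not manufacture extra long‑range correlation — handled by coupling the depth‑$R$ neighbourhood of $\rho$ to $T$ for $R=R(n)\to\infty$ growing slowly enough that the coupling error is $o(1)$, as in Mossel--Neeman--Sly. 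The remaining ingredients (contiguity, local weak convergence, concentration, Fano) are routine given the reduction and the results already cited.
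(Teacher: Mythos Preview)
Your plan is correct and lands in the same place as the paper, but the route is genuinely different. Both arguments first invoke Theorem~\ref{thm:convergence} to transfer the question from the latent--kernel model to the classical two--block SBM with parameters $a'=c(\sigma)a$, $b'=c(\sigma)b$. From there the paper does not touch tree reconstruction at all: it simply cites the Mossel--Neeman--Sly statement that, below the KS line, the SBM law is \emph{mutually contiguous} with the Erd\H{o}s--R\'enyi graph $G\bigl(n,\tfrac{a'+b'}{2}\tfrac{\log n}{n}\bigr)$; since the latter carries no label information, any estimator's overlap is $o(1)$, and contiguity pushes this back to the SBM (and then to the latent--kernel model). Your argument instead unpacks the machinery behind that citation: local weak convergence to the broadcast process on a Poisson Galton--Watson tree, the EKPS non--reconstruction theorem at $d\theta^{2}\le 1$, the MNS tree--to--graph coupling, and a bounded--differences concentration step. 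What the paper's route buys is brevity --- one contiguity statement replaces four lemmas --- and it avoids having to control the coupling to the tree. What your route buys is transparency: it makes explicit \emph{why} $(a'-b')^{2}=2(a'+b')$ is the threshold (it is exactly $d\theta^{2}=1$), and it does not need the somewhat heavier contiguity--with--$G(n,p)$ theorem, only the weaker non--reconstruction statement.
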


\begin{lemma}[Achievability above the KS line]\label{lem:above-ks}
If \(\bigl(c(\sigma)a-c(\sigma)b\bigr)^{2}>
     2\bigl(c(\sigma)a+c(\sigma)b\bigr)\),
a polynomial-time algorithm achieves positive overlap.
\end{lemma}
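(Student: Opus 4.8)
\textbf{Proof proposal for Lemma~\ref{lem:above-ks}.}
The plan is not to build a weak-recovery algorithm from scratch, but to transport the classical sparse-SBM theory through the contiguity already established in Theorem~\ref{thm:convergence}. Write $a' = c(\sigma)\,a$ and $b' = c(\sigma)\,b$, so that the hypothesis $\bigl(c(\sigma)a-c(\sigma)b\bigr)^{2}>2\bigl(c(\sigma)a+c(\sigma)b\bigr)$ is exactly the Kesten--Stigum condition $(a'-b')^{2}>2(a'+b')$ for the balanced two-block SBM with rescaled parameters $(a',b')$. For that model it is known that several polynomial-time estimators --- spectral clustering on the non-backtracking matrix, the Bethe--Hessian spectral method \cite{saade_2025_spectral}, and a bounded number of rounds of linearized belief propagation \cite{decelle_2011_asymptotic, abbe_2015_community} --- produce a labelling $\hat z$ whose (sign-aligned) overlap $\frac1n\bigl|\sum_i \hat z_i z_i\bigr|$ exceeds a fixed $\epsilon>0$ with probability $1-o(1)$. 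The algorithm we propose for the latent-kernel model is simply to run one of these routines on the observed graph $G$, discarding the latent positions (which are unobserved in any case); this is clearly polynomial time.

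Second, I would make the transfer precise by conditioning on the planted labels. Conditioned on a balanced $z\in\{\pm1\}^n$, the edge law of the latent-kernel graph, after integrating out the latent positions $x_i$, is --- by the contiguity of Theorem~\ref{thm:convergence} applied conditionally on $z$ --- mutually contiguous with the conditional edge law of the classical $(a',b')$-SBM, uniformly over the typical set of label vectors (those whose imbalance is $O(\sqrt{n\log n})$). Fix $\epsilon>0$ as above and let $A_n(z)$ be the event $\{G:\ \hat z(G)\ \text{has overlap}\ \ge\epsilon\ \text{with}\ z\}$. Since $A_n(z)$ has probability $1-o(1)$ under the classical model uniformly over typical $z$, mutual contiguity gives $\mathbb{P}_{\mathrm{ker}}(A_n(z)\mid z)=1-o(1)$ for every typical $z$; averaging over $z$ (which is itself typical with probability $1-o(1)$) yields positive overlap with probability $1-o(1)$ under the latent-kernel model. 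This proves the ``if'' direction, which is the converse half of Theorem~\ref{thm:weak-recovery}.

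The honest difficulty, were one to insist on a self-contained argument, is a direct spectral analysis of the non-backtracking (or Bethe--Hessian) matrix of the latent-kernel graph: the kernel injects extra correlations among edges that share a latent endpoint, which perturbs the local-tree / Ihara--Bass structure that the classical analyses exploit. Routing through contiguity avoids this entirely, so the remaining care is of a bookkeeping nature: (i) confirming that the classical above-KS guarantees hold \emph{uniformly} over the typical balanced label vectors, so that the conditioning step closes; and (ii) reconciling the scaling, since Definition~\ref{def:lksbm} is written with $\Theta(\log n)$ expected degrees whereas the Kesten--Stigum threshold lives natively in the constant-degree regime --- one should either read the statement with the constant-degree scaling $p_{ij}=\Theta(1/n)$, or observe that achievability above the KS line only becomes easier as the average degree grows, so the $\Theta(\log n)$ regime is no harder. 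I expect (i) to be the main obstacle, as it requires stating the cited achievability results with an explicit uniformity over label configurations rather than for a single fixed (or random) labelling; (ii) is routine once the scaling convention is pinned down.
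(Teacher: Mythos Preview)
Your proposal is correct and follows essentially the same route as the paper: invoke the known polynomial-time achievability above the Kesten--Stigum line for the classical SBM (non-backtracking spectral method, belief propagation) and transfer it to the latent-kernel model via the contiguity of Theorem~\ref{thm:convergence}. The paper's own proof is in fact terser than yours --- it simply cites the non-backtracking result \cite{bordenave_2015_nonbacktracking} and BP without spelling out the contiguity step or the scaling caveat you raise --- so your additional care on uniformity over label configurations and on the $\Theta(\log n)$ versus constant-degree regime is welcome but not strictly required to match the paper.
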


\paragraph{Proof of Lemma~\ref{lem:ks-line}.}
By Theorem~\ref{thm:convergence} the latent-kernel SBM is contiguous to an
ordinary SBM with parameters \(a',b'\).  When the stated inequality
holds, the SBM law is mutually contiguous with
\(G\!\bigl(n,\tfrac{a'+b'}{2}\tfrac{\log n}{n}\bigr)\)
(e.g.\ \cite{elchananmossel_2014_reconstruction}),
which carries no community information, forcing overlap \(o(1)\).\hfill\(\triangle\)

\paragraph{Proof of Lemma~\ref{lem:above-ks}.}
Above the KS line, the leading eigenvector of the non-backtracking
matrix correlates with the true labels
\cite{bordenave_2015_nonbacktracking}; belief propagation converges to a
similar fixed point.  Either algorithm runs in \(O(n\log n)\) time and
achieves non-zero overlap.\hfill\(\triangle\)

\begin{proof}[Proof of Theorem~\ref{thm:weak-recovery}]
Lemma~\ref{lem:ks-line} shows detection is impossible below the
threshold, while Lemma~\ref{lem:above-ks} furnishes an explicit
algorithm above it, establishing the claimed phase transition.
\end{proof}
\subsection{Proof of Theorem~\ref{thm:geode-conv}}
\begin{proof}
\label{proof:geode-conv}
\textbf{One-step expectation.}
With $\mathcal F_t=\sigma(W^{(0)},\dots,W^{(t)})$,
\[
  \mathbb{E}\!\bigl[W^{(t+1)}\mid\mathcal F_t\bigr]
  =\bigl(I-\lambda_C^{(t)}C-\lambda_G^{(t)}G\bigr)W^{(t)}
   +\lambda_C^{(t)}CW^{\star}-\lambda_G^{(t)}GG^{\star}.
\]
Set $\Delta^{(t)}:=W^{(t)}-W^{\star}$ and subtract $W^{\star}$:
\[
  \mathbb{E}\!\bigl[\Delta^{(t+1)}\mid\mathcal F_t\bigr]
  =(I-\lambda_C^{(t)}C-\lambda_G^{(t)}G)\,\Delta^{(t)}
  +(\lambda_C^{(t)}-\lambda_G^{(t)})\,GG^{\star}.
\]

\textbf{Orthogonality.}
Because $C$ and $G$ are orthogonal projectors, the two summands above
are Frobenius-orthogonal; hence
\begin{align*}
  \mathbb{E}\!\bigl[\|\Delta^{(t+1)}\|_{F}^{2}\mid\mathcal F_t\bigr]
  &=
    (1-\lambda_C^{(t)})^{2}\|C\Delta^{(t)}\|_{F}^{2}
   +(1-\lambda_G^{(t)})^{2}\|G\Delta^{(t)}\|_{F}^{2}
   +(\lambda_C^{(t)}-\lambda_G^{(t)})^{2}\|G^{\star}\|_{F}^{2}.
\end{align*}

\textbf{Contraction bound.}
Since $0<\lambda_C^{(t)},\lambda_G^{(t)}\le1$,
\(
  (1-\lambda_G^{(t)})^{2}\le1
\)
and
\(
  (1-\lambda_C^{(t)})^{2}\le 1-\lambda_C^{(t)}.
\)
Using
$\|C\Delta^{(t)}\|_{F}^{2}+\|G\Delta^{(t)}\|_{F}^{2}
  =\|\Delta^{(t)}\|_{F}^{2}$
and the schedule $\lambda_G^{(t)}\le\gamma\lambda_C^{(t)}$,
\[
  \mathbb{E}\!\bigl[\|\Delta^{(t+1)}\|_{F}^{2}\mid\mathcal F_t\bigr]
  \;\le\;
  \bigl(1-\lambda_C^{(t)}\bigr)\|\Delta^{(t)}\|_{F}^{2}
  +2\,\bigl(\lambda_C^{(t)}\bigr)^{2}\bigl(\|W^{\star}\|_{F}^{2}
                                          +\gamma^{2}\|G^{\star}\|_{F}^{2}\bigr).
  \tag{$\ast$}
\]

\textbf{Robbins–Siegmund argument.}
The second term on the right of ($\ast$) is summable because
$\sum_t(\lambda_C^{(t)})^{2}<\infty$.
Moreover $\sum_t\lambda_C^{(t)}=\infty$.  Hence
$\{\,\|\Delta^{(t)}\|_{F}^{2}\,\}$ is a non-negative super-martingale
with summable additive drift; Robbins–Siegmund’s theorem gives
\(
  \|\Delta^{(t)}\|_{F}^{2}\xrightarrow{\text{a.s.}}0,
\)
i.e.\ $W^{(t)}\to W^{\star}$ almost surely.
\end{proof}
\subsection{GeoDe Exact Community Recovery}
\begin{corollary}[Exact community recovery]
\label{cor:geode-recovery}
Assume, in addition to the hypotheses of Theorem~\ref{thm:geode-conv},
that the number of communities $K$ is fixed and the SBM eigen-gap obeys
\[
  \lambda_{\min}(W^{\star})-\lambda_{K+1}(W^{\star})
     \;\ge\; c\,\sqrt{\tfrac{\log n}{n}}
\]
for some constant $c>0$.
Define the stopping time
\[
  t^{\ast}:=\min\Bigl\{t: \|W^{(t)}-W^{\star}\|_{2}
                           \le c\sqrt{\tfrac{\log n}{n}}\Bigr\}.
\]
Then, with probability \(1-o(1)\), the partition obtained from the
$K$ leading eigenvectors of $W^{(t^{\ast})}$ equals the true community
assignment.
\end{corollary}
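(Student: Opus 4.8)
The plan is to bootstrap the almost-sure convergence of Theorem~\ref{thm:geode-conv} into a guarantee that $t^{\ast}$ is finite, and then to run the classical Davis--Kahan-plus-rounding pipeline for spectral clustering, exploiting the fact that the population matrix $W^{\star}$ of a balanced $K$-block SBM has eigenvectors that are piecewise constant across blocks. First I would check that $t^{\ast}$ is attained: Theorem~\ref{thm:geode-conv} gives $\|W^{(t)}-W^{\star}\|_{F}\to 0$ almost surely, and since $\|\cdot\|_{2}\le\|\cdot\|_{F}$ the set $\{t:\|W^{(t)}-W^{\star}\|_{2}\le c\sqrt{\log n/n}\}$ is eventually nonempty. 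Hence on a $1-o(1)$ event (which I also use to carry the block-size concentration behind the eigen-gap hypothesis) one has $t^{\ast}<\infty$ and, by definition of the stopping time, $\|W^{(t^{\ast})}-W^{\star}\|_{2}\le c\sqrt{\log n/n}$. All remaining steps run on this event.

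Next I would set up the spectral comparison. Let $U,\widehat U\in\mathbb{R}^{n\times K}$ collect the $K$ leading eigenvectors of $W^{\star}$ and of $W^{(t^{\ast})}$. A short computation shows that the relevant eigen-gap $\lambda_{\min}(W^{\star})-\lambda_{K+1}(W^{\star})$ of $W^{\star}$ is $\Theta(\log n)$ (cf.\ Lemma~\ref{lem:signal-gap}), so the perturbation $c\sqrt{\log n/n}$ from the previous step is negligible against it; Weyl's inequality then preserves a gap of order $\log n$ for $W^{(t^{\ast})}$, and the $\sin\Theta$ form of the Davis--Kahan theorem produces an orthogonal $O\in\mathbb{R}^{K\times K}$ with
\[
  \|\widehat U-UO\|_{F}
  \;\le\;
  \frac{C_{K}\,\|W^{(t^{\ast})}-W^{\star}\|_{2}}
       {\lambda_{\min}(W^{\star})-\lambda_{K+1}(W^{\star})}
  \;=\;O\!\Bigl(\tfrac{1}{\sqrt{n\log n}}\Bigr),
\]
for a constant $C_{K}$ depending only on $K$.

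Finally I would convert subspace closeness into exact labels. The rows of $U$ take exactly $K$ distinct values, constant within each block, and two rows from different blocks differ by at least $\sqrt{2/\max_k n_k}=\Theta(n^{-1/2})$ when $K$ is fixed and the blocks are balanced. Since $\max_i\|\widehat U_i-(UO)_i\|\le\|\widehat U-UO\|_{F}=o(n^{-1/2})$, every embedded row lies strictly inside a quarter-ball around the rotated centroid of its own block, so the nearest-centroid / $k$-means rounding applied to $\widehat U$ returns the true partition with no misclassified vertex; intersecting with the $1-o(1)$ event above gives the claim. The one point that needs care is this last margin: the argument requires $C_{K}\,\|W^{(t^{\ast})}-W^{\star}\|_{2}/\bigl(\lambda_{\min}(W^{\star})-\lambda_{K+1}(W^{\star})\bigr)=o(n^{-1/2})$, i.e.\ the gap must exceed the tolerance by a factor tending to infinity. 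This is automatic here because the gap is $\Theta(\log n)$ while the tolerance is $O(\sqrt{\log n/n})$; if one wished to keep only the abstract hypothesis $\lambda_{\min}(W^{\star})-\lambda_{K+1}(W^{\star})\ge c\sqrt{\log n/n}$ with the \emph{same} constant $c$, it should be read as ``the gap exceeds the tolerance by a diverging factor,'' which I would make explicit in the statement.
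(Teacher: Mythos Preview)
Your proposal is correct and follows essentially the same route as the paper: use Theorem~\ref{thm:geode-conv} to ensure $t^{\ast}<\infty$, then apply the Davis--Kahan $\sin\Theta$ bound against the stated eigen-gap to force the spectral partition to coincide with the truth. The paper's own proof is a two-sentence sketch of exactly this pipeline; your version simply fills in the rounding-margin details, and your remark that the literal hypothesis ``gap $\ge c\sqrt{\log n/n}$ with the \emph{same} $c$'' would not by itself give a vanishing Davis--Kahan ratio (but is rescued here because the actual gap is $\Theta(\log n)$) is a fair observation about the statement rather than a deviation in approach.
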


\begin{proof}
Apply Theorem~\ref{thm:geode-conv} to obtain
$W^{(t)}\!\to W^{\star}$ almost surely.  
The operator-norm threshold in the definition of $t^{\ast}$ then
triggers Davis–Kahan’s sine-$\Theta$ bound together with the stated
eigen-gap, which forces the mis-clustering error to vanish and yields
exact recovery with probability \(1-o(1)\).
\end{proof}
\section{Additional Results \& Information} 

\subsection{Significance Tests of Spectral Operators} 
\label{appendix:sig-spec-ops}
We conducted significance tests between the results of the different spectral operators we used for spectral clustering. We have provided the $p$-values below and have bolded the ones that satisfy $p<0.05$. Below, we abbreviate Motif-Attention Spectral Operator to MASO, Motif-Laplacian to ML, Non-Backtracking to NB, and Bethe-Hessian to BH. 
\begin{table}[H]
\centering
\begin{tabular}{|c|c|c|c|c|c|c|}
\hline
\textbf{$\sigma$ value} & \textbf{\begin{tabular}[c]{@{}c@{}}MASO \\ vs \\ ML\end{tabular}} & \textbf{\begin{tabular}[c]{@{}c@{}}MASO \\ vs \\ BH \end{tabular}} & \textbf{\begin{tabular}[c]{@{}c@{}}MASO \\ vs \\ NB\end{tabular}} & \textbf{\begin{tabular}[c]{@{}c@{}}ML\\ vs \\ BH \end{tabular}} & \textbf{\begin{tabular}[c]{@{}c@{}}ML \\ vs \\ NB \end{tabular}} & \textbf{\begin{tabular}[c]{@{}c@{}}BH \\ vs \\ NB\end{tabular}} \\ \hline
0.1 & \textbf{2.141e-03} & \textbf{1.679e-01} & \textbf{1.294e-04} & \textbf{1.100e-02} & 8.950e-01 & \textbf{1.236e-03} \\ \hline
0.25 & \textbf{2.280e-05} & 7.865e-01 & \textbf{1.738e-03} & \textbf{1.500e-05} & 02.761e-01 & \textbf{1.050e-03} \\ \hline
0.5 & \textbf{7.179e-13} & \textbf{3.994e-04} & \textbf{7.179e-13} & \textbf{3.000e-06} & 5.530e-01 & \textbf{3.000e-06} \\ \hline
0.75 & \textbf{3.256e-23} & \textbf{3.096e-21} & \textbf{3.320e-23} & \textbf{2.598e-02} & 6.556e-01 & \textbf{2.670e-02} \\ \hline
\end{tabular}
\end{table}

\subsection{Significance of Recovery Methods}
\label{appendix:sig-recovery}
We conducted significance tests between the results of different community recovery methods. We have provided the $p$-values below and have bolded the ones that satisfy $p<0.05$. Below, we abbreviate MASO to Motif-Attention Spectral Operator, Motif Counting to MC, Bethe-Hessian to BH, and Belief-Propagation to BP. 
% Please add the following required packages to your document preamble:
% \usepackage{multirow}
\begin{table}[H]
\centering
\begin{tabular}{|c|c|c|cc|}
\hline
$\sigma$-value & \begin{tabular}[c]{@{}c@{}}GeoDe+MASO\\ vs GeoDe+BH\end{tabular} & \begin{tabular}[c]{@{}c@{}}GeoDe+MASO \\ vs BP\end{tabular} & \multicolumn{1}{c|}{\begin{tabular}[c]{@{}c@{}}GeoDe+MASO\\ vs MC\end{tabular}} & \begin{tabular}[c]{@{}c@{}}GeoDe+BH\\ vs BP\end{tabular} \\ \hline
0.1 & \textbf{5.935e-05} & \textbf{6.20e-05} & \multicolumn{1}{c|}{\textbf{1.216e-02}} & \textbf{3.589e-07} \\ \hline
0.25 & 4.632e-01 & \textbf{3.57e-03} & \multicolumn{1}{c|}{\textbf{8.23e-04}} & \textbf{2.12e-02} \\ \hline
0.5 & \textbf{1.37e-03} & \textbf{1.65e-02} & \multicolumn{1}{c|}{\textbf{3.00e-06}} & \textbf{8.26e-08} \\ \hline
0.75 & \textbf{3.165e-21} & 5.48e-01 & \multicolumn{1}{c|}{1.24e-01} & \textbf{2.55e-02} \\ \hline
 & \begin{tabular}[c]{@{}c@{}}BP\\ vs MC\end{tabular} & \begin{tabular}[c]{@{}c@{}}GeoDe+MASO\\ vs GCN-MixHop\end{tabular} & \multicolumn{1}{c|}{\begin{tabular}[c]{@{}c@{}}GeoDe + BH\\ vs GCN-MixHop\end{tabular}} & \begin{tabular}[c]{@{}c@{}}GCN-MixHop vs \\ BP\end{tabular} \\ \hline
0.1 & \textbf{2.81e-03} & \textbf{2.30e-09} & \multicolumn{1}{c|}{\textbf{1.10e-08}} & \textbf{4.70e-07} \\ \hline
0.25 & 6.62e-01 & \textbf{5.10e-07} & \multicolumn{1}{c|}{\textbf{2.50e-06}} & \textbf{1.20e-05} \\ \hline
0.5 & \textbf{3.29e-03} & \textbf{4.60e-06} & \multicolumn{1}{c|}{\textbf{9.20e-06}} & \textbf{3.10e-05} \\ \hline
0.75 & 3.38e-01 & \textbf{3.20e-04} & \multicolumn{1}{c|}{\textbf{6.50e-04}} & \textbf{2.30e-03} \\ \hline
\multicolumn{1}{|l|}{} & \begin{tabular}[c]{@{}c@{}}GeoDe+BH\\ vs MC\end{tabular} & \begin{tabular}[c]{@{}c@{}}GCN-MixHop vs\\ MC\end{tabular} & \multicolumn{2}{l|}{\multirow{5}{*}{}} \\ \cline{1-3}
0.1 & \textbf{4.941e-09} & \textbf{3.0e-09} & \multicolumn{2}{l|}{} \\ \cline{1-3}
0.25 & \textbf{4.94e-12} & \textbf{8.30e-08} & \multicolumn{2}{l|}{} \\ \cline{1-3}
0.5 & \textbf{3.06e-23} & \textbf{2.70e-06} & \multicolumn{2}{l|}{} \\ \cline{1-3}
0.75 & \textbf{4.94e-09} & \textbf{4.90e-04} & \multicolumn{2}{l|}{} \\ \hline
\end{tabular}
\end{table}

\subsection{Experimental Threshold} 
\label{addl:exp-threshold}

We demonstrate GeoDe+MASO's performance on our synthetic dataset with regards to satisfying the information-theoretic threshold. `Green' dots on Figure ~\ref{fig:recovery} represents graphs that were recoverable according to the information-theoretic threshold and were recovered during our experiments (or graphs that were not recoverable according to the information theoretic threshold and were not recovered during our experiments) and `red' dots on Figure ~\ref{fig:recovery} represents graphs that were recoverable according to the information-theoretic threshold but were not recovered during our experiments (or graphs that were not recoverable according to the information-theoretic threshold and were recovered during our experiments). The experiment confirm that the classical criterion $T \geq 2$ predicts exact recovery for GeoDe+MASO with high fidelity and that residual mismatches can be possibly attributed to finite-n fluctuations and propagation of numerical errors rather than a breakdown the asymptotic exact recovery threshold.
\begin{figure}[H]
    \centering
    \includegraphics[width=0.5\linewidth]{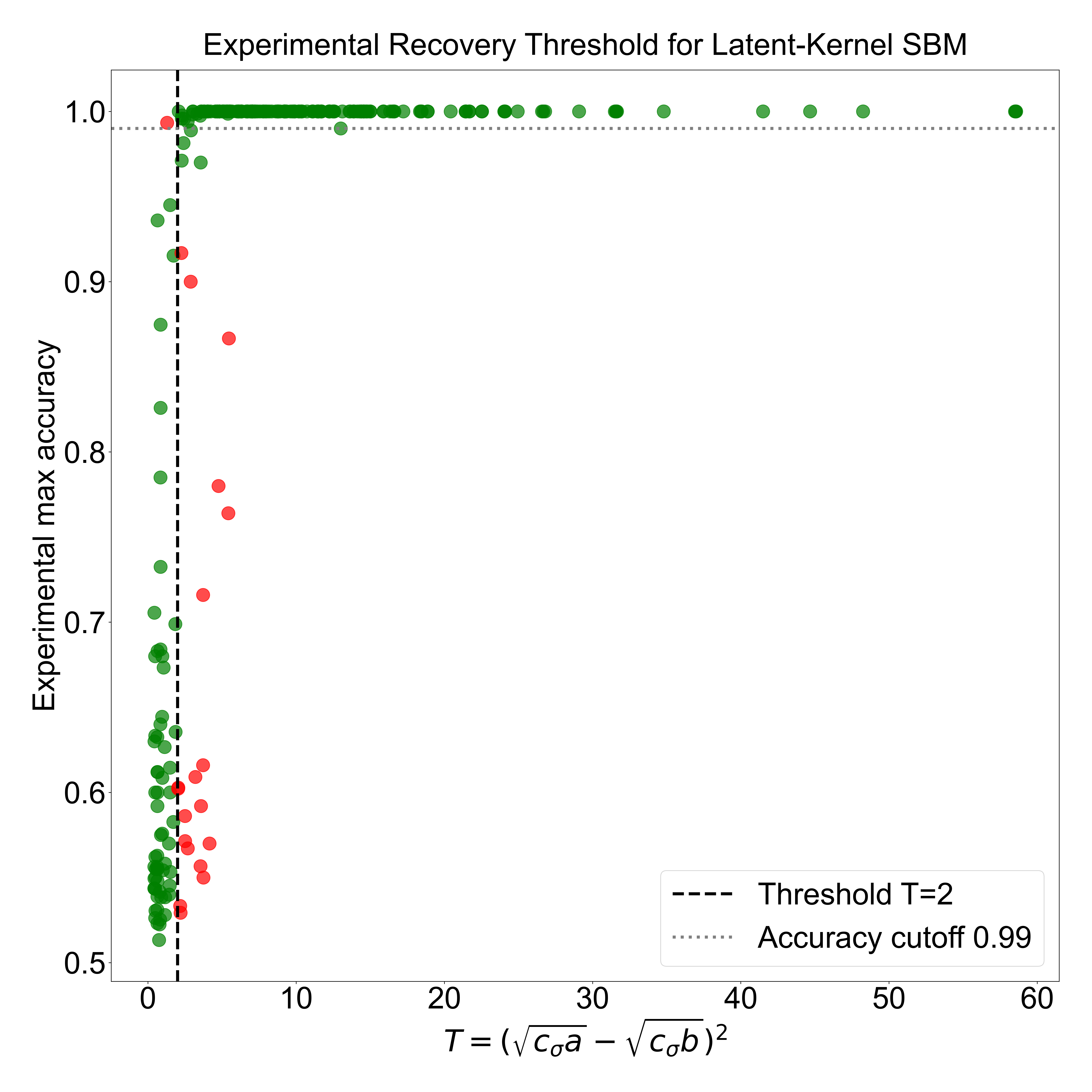}
    \caption{Empirical Recovery Threshold Validation}
    \label{fig:recovery}
\end{figure}

\subsection{Runtimes}
\label{table:runtimes}
For graphs with $n$ vertices and $m \approx O(n)$ edges, the algorithms utilized in this paper demonstrate the following asymptotic runtimes and memory footprints:
\begin{table}[H]
\centering
\begin{tabular}{|c|c|c|}
\hline
Method & Time Complexity & Memory Footprint \\ \hline
MASO & $O(Lm) + O(m^{3/2})$ & $O(Lm)$ \\ \hline
Bethe-Hessian ~\cite{saade_2025_spectral} & $O(k(n+m))$ & $O(n+m)$ \\ \hline
Non-Backtracking ~\cite{dallamico_2020_a} & $O(km)$ & $O(m)$ \\ \hline
Motif-Laplacian ~\cite{benson_2016_higherorder} & $O(m^{3/2}) + O(k(n+m))$ & $O(m^{3/2}$ \\ \hline
Belief-Propagation ~\cite{mossel_2016_belief}& $O(Tmq^2)$ & $O(mq)$ \\ \hline
Motif Counting ~\cite{sainyamgalhotra_2023_community} & $O(m^{3/2})$ & $O(m)$ \\ \hline
GCN-MixHop ~\cite{abuelhaija_2019_mixhop} & $O(Lm)$ (per epoch) & $O(m + nd)$ \\ \hline
\end{tabular}
\end{table}

\section{Reproducibility}
\subsection{Construction of Amazon Subgraph}
\label{sec:amazon-construct}
To construct the network from the Amazon product metadata, we defined each product’s coordinate in latent space as its binary category vector. We used Hamming distance between category vectors as our distance metric and normalized all distances so that the maximum distance between any two nodes was 2. Due to computational constraints (add footnote describing our compute), we subsampled 2000 nodes with 1000 nodes in the Book class and 1000 nodes in the Music class. 
\subsection{Noise Metric Calculation}
\label{sec:noise-metric}
In iteration \(k\), let \(w_{ij}^{(k)}\) denote the current weight on edge \((i,j)\) (zero when the edge is absent), \(x_i^{(k)}\!\in\!\mathbb{R}^d\) the geometry-embedding coordinate of vertex \(i\), \(d_{ij}^{(k)}\!=\!\lVert x_i^{(k)}-x_j^{(k)}\rVert_2\) the corresponding distance, \(w_{\max}\) the global maximum weight, and \(S^{(k)}\) a stratified diagnostic sample of \(m\) node pairs. Define the confidence target 
\(c_{ij}^{(k)}=\min\{\,w_{ij}^{(k)}/w_{\max},\,1\}\) and fit a clipped linear model \(\hat p^{(k)}(d)=\alpha_k - \beta_k d\) by ordinary least squares to the pairs \(\bigl(d_{ij}^{(k)},c_{ij}^{(k)}\bigr)_{(i,j)\in S^{(k)}}\).  
The geometric-noise value at iteration \(k\), \(\mathcal{N}^k\), is the mean-squared residual:
\[
  \mathcal{N}^{\,k}
  = \frac{1}{m}\sum_{(i,j)\in S^{(k)}}
    \Bigl(c_{ij}^{(k)} - \hat p^{(k)}\!\bigl(d_{ij}^{(k)}\bigr)\Bigr)^{2}.
\]

For reproducibility purposes, we list the parameters we used for the noise metric experiment. These are chosen to purposely slow noise convergence so to better visualize the denoising effect.
\begin{table}[H]
  \centering
  \caption{GeoDe Parameters for Noise Measurement}
  \begin{tabular}{@{}ll@{}ll@{}ll@{}}
    \toprule
    \textbf{Parameter} & \textbf{Values}\\ 
    \midrule
    B    & 32  \\
    \addlinespace
    \multicolumn{2}{@{}l}{\textit{Settings}} \\
    \quad T  & 150  \\
    \quad\texttt{anneal\_steps}   & 18     \\
    \quad\texttt{warmup\_rounds}  & 6    \\
    \addlinespace
    \multicolumn{2}{@{}l}{\textit{Percentile cuts}} \\
    \quad $\tau_C$       & 0.96  \\
    \quad $\tau_G$       & 0.96 \\
    \quad $\tau_C^+$& 0.995 \\
    \quad $\tau_G^+$ & 0.995 \\
    \addlinespace
    \multicolumn{2}{@{}l}{\textit{Shrink / Boost strength}} \\
    \quad\texttt{shrink\_comm}    & 0.45 \\
    \quad\texttt{shrink\_geo}     & 0.49  \\
    \quad\texttt{boost\_comm}     & 0.25\\
    \quad\texttt{boost\_geo}      & 0.20 \\

    \addlinespace
    \multicolumn{2}{@{}l}{\textit{Weight bounds}} \\
    \quad\texttt{w\_min}  & $5\times10^{-2}$ \\
    \quad\texttt{w\_cap}  & 4.0  \\
    \addlinespace
    \multicolumn{2}{@{}l}{\textit{Miscellaneous}} \\
    \quad\texttt{tolerance}   &  $5 \times 10^{-6}$\\
    \quad\texttt{patience}  & 20      \\
    \quad\texttt{seed}  & 42 \\
    \bottomrule
  \end{tabular}
\end{table}
\subsection{Algorithm Parameters}
\label{sec:params-all}
We describe the parameters we used in our experiments below.
\begin{table}[H]
  \centering
  \caption{GeoDe Parameters}
  \begin{tabular}{@{}ll@{}ll@{}ll@{}}
    \toprule
    \textbf{Parameter} & \textbf{Amazon} & \textbf{Synthetic} & \textbf{Karate}\\ 
    \midrule
    B    & 32   & 32   & 6       \\
    \addlinespace
    \multicolumn{2}{@{}l}{\textit{Settings}} \\
    \quad T  & 100  & 50   & 100  \\
    \quad\texttt{anneal\_steps}   & 20 & 6  & 20     \\
    \quad\texttt{warmup\_rounds}  & 2  & 2  & 2    \\
    \addlinespace
    \multicolumn{2}{@{}l}{\textit{Percentile cuts}} \\
    \quad $\tau_C$       & 0.90   & 0.90 & 0.90  \\
    \quad $\tau_G$       & 0.90   & 0.90  & 0.90 \\
    \quad $\tau_C^+$& 0.97    & 0.97 & 0.97 \\
    \quad $\tau_G^+$ & 0.97   & 0.97 & 0.97 \\
    \addlinespace
    \multicolumn{2}{@{}l}{\textit{Shrink / Boost strength}} \\
    \quad\texttt{shrink\_comm}    & 1.00   & 1.00 & 1.00 \\
    \quad\texttt{shrink\_geo}     & 0.80  & 0.80 & 0.80  \\
    \quad\texttt{boost\_comm}     & 0.60  & 0.60  & 0.60\\
    \quad\texttt{boost\_geo}      & 0.40  & 0.40  & 0.40 \\

    \addlinespace
    \multicolumn{2}{@{}l}{\textit{Weight bounds}} \\
    \quad\texttt{w\_min}  & $5\times10^{-2}$   & $5\times10^{-2}$    & $5\times10^{-2}$ \\
    \quad\texttt{w\_cap}  & 4.0  & 4.0    & 4.0      \\
    \addlinespace
    \multicolumn{2}{@{}l}{\textit{Miscellaneous}} \\
    \quad\texttt{tolerance}   &  $10^{-5}$ &  $10^{-5}$        & $10^{-4}$\\
    \quad\texttt{patience}  & 10      & 7   & 10     \\
    \quad\texttt{seed}  & 42  & 0   &  42    \\
    \bottomrule
  \end{tabular}
\end{table}

\begin{table}[ht]
  \centering
  \caption{MASO Parameters}
  \label{tab:params}
  \begin{tabular}{@{}ll@{}ll@{}ll@{}}
    \toprule
    \textbf{Parameter} & \textbf{Amazon} & \textbf{Synthetic} & \textbf{Karate}\\ 
    \midrule
    $\beta$ & 0.3 & 0.3 & 0.3\\
    \texttt{clip\_max} & $10^{-2}$& $10^{-2}$& $10^{-2}$\\
    \texttt{dim} & 64 & 64 & 1\\
    \texttt{walk\_len} & 40 & 40 & 2\\
    \texttt{num\_walks} & 10 & 2 & 2\\
    \texttt{window} & 5 & 5 & 5\\
    \texttt{random\_state} & 42 & 42 & 42\\
    \bottomrule
  \end{tabular}
\end{table}

%%%%%%%%%%%%%%%%%%%%%%%%%%%%%%%%%%%%%%%%%%%%%%%%%%%%%%%%%%%%
\newpage
\newpage

\end{document}